\newcommand{\sub}[1]{\ensuremath{_{\mbox{\scriptsize \,#1}}}}
\newcommand{\mN}{\mathcal{N}}
\newcommand{\E}{\ensuremath{\mathbb{E}}}
\newcommand{\R}{\ensuremath{\mathbb{R}}}
\newcommand{\V}{\ensuremath{\mathbb{V}}}
\newcommand{\Cov}{\ensuremath{\mathbb{C}\text{ov}}}
\newcommand{\mE}{\ensuremath{\mathcal{E}}}
\newcommand{\mX}{\ensuremath{\mathcal{X}}}
\newcommand{\mY}{\ensuremath{\mathcal{Y}}}
\newcommand{\convd}{\ensuremath{\xrightarrow{d}}}
\newcommand{\hta}{\ensuremath{\widehat{\theta}}}
\newcommand{\hf}{\ensuremath{\widehat{f}}}
\newcommand{\hh}{\ensuremath{\widehat{h}}}
\newcommand{\hnu}{\ensuremath{\hat{\nu}}}
\newcommand{\hmu}{\ensuremath{\hat{\mu}}}
\newcommand{\ba}{\ensuremath{\beta}}
\newcommand{\sig}{\ensuremath{\sigma}}
\newcommand{\ta}{\ensuremath{\tilde{a}}}
\newcommand{\tb}{\ensuremath{\tilde{b}}}
\newcommand{\tS}{\ensuremath{\widetilde{S}}}
\newcommand{\jphi}{J_{\emptyset}}
\newcommand{\hatm}{\widehat{m}}
\newtheorem{remark}{Remark}
\newtheorem{lemma}{Lemma}
\newtheorem{theom}{Theorem}
\DeclareMathOperator{\tr}{tr}
\DeclarePairedDelimiter\floor{\lfloor}{\rfloor}
\begin{document}
%%%%%%%%%%%%%%%%%%%%%%%%%%%%%%%%%%%%%%%%%%%%%%%%%%%%%%%%%%%%%%%%%%%%%%%%

\title{The Effective Number of Parameters in Kernel Density Estimation}
%[Effective Number of Parameters in KDE]
\author{
Sofia Guglielmini\thanks{KU Leuven, Faculty of Economics and Business,
  Naamsestraat 69, 3000 Leuven, Belgium, sofia.guglielmini@kuleuven.be}
\and 
Igor Volobouev\thanks{Texas Tech University, Department of Department of Physics \& Astronomy, Lubbock TX 79409-1042, U.S.A., i.volobouev@ttu.edu}
\and 
A. Alexandre Trindade\thanks{Texas Tech University, Department of Mathematics
\& Statistics, Lubbock TX 79409-1042, U.S.A., alex.trindade@ttu.edu
(corresponding author)}
}
\maketitle

\begin{abstract}
We devise a formula for measuring the effective
degrees of freedom (EDoF) in kernel density estimation (KDE). Starting from the
orthogonal polynomial sequence (OPS) expansion for the ratio of the
empirical to the oracle density, we show how convolution with the kernel leads to a new OPS
with respect to which one may express the resulting KDE. The expansion
coefficients of the two OPS systems can then be related via a kernel
sensitivity matrix, which leads to a natural definition of EDoF through the trace operator. Asymptotic properties of the
plug-in EDoF are worked out through influence functions, and
connections with other empirical EDoFs are established. 
Minimization of Kullback–Leibler divergence is investigated as an
alternative to integrated squared error based
bandwidth selection rules, yielding a new normal scale rule. The
methodology suggests the
possibility of a new bandwidth selection rule  based on an information
criterion such as AIC. 
\end{abstract}

\textbf{Keywords:} effective degrees of freedom; orthogonal
polynomials; influence functions; Kullback–Leibler divergence; bandwidth selection. (MSC Codes: 62G07, 62G20.)

%%%%%%%%%%%%%%%%%%%%%%%%%%%%%%%%%%%%%%%%%%%%%%%%%%%%%%%%%%%%%%%%%%%%%%%%
\section{Introduction}
%%%%%%%%%%%%%%%%%%%%%%%%%%%%%%%%%%%%%%%%%%%%%%%%%%%%%%%%%%%%%%%%%%%%%%%%
%\newpage
In parametric modeling, the complexity of a model is measured by the number
of estimated parameters. In a nonparametric or semiparametric setting, this
concept is not as obvious, so the number of parameters (which is
undefined) could be replaced by the term \emph{effective degrees of
  freedom} (EDoF). An example is provided by kernel density estimation (KDE), also known as Parzen’s window
\citep{parzen1962kde}, which is arguably regarded as the default
nonparametric (smooth) density
estimation method \citep{silverman1986,ref:Wand1995,scott2015kde}.  The aim of this paper is to devise an expression for the EDoF in such kernel-based density estimation
procedures, which has historically been difficult to
quantify due to the infinite dimensionality of the underlying
function. The development of a plausible measure of model complexity
would then open up the possibility of new bandwidth selection rules based
on information criteria. 

To set the stage, let $F(x)$ and $f(x)$ be respectively the cumulative distribution and density functions of the continuous random variable
$X$, supported on the interval $[a,b]$. Although we allow $a$ and $b$ to be
infinite in the general problem formulation, some results will require
a compact support. Kernel-based methods essentially
smooth $f(x)$ according to a chosen kernel $K(x, y)$, resulting in
\begin{equation} \label{eq:oracle-convolution}
  f_K(y) = \int K(x, y) f(x) dx = \E K(X, y).
\end{equation}
Given an independent and identically distributed
(iid) sample $x_1, \ldots, x_n$ from $X$, KDE involves replacing the oracle density with
the empirical in \eqref{eq:oracle-convolution}:
\begin{equation} \label{eq:convolution}
  \hf(y) = \int K(x, y) \rho(x) dx = \frac{1}{n} \sum_{i=1}^n K(x_i, y),
\end{equation}
where $\rho(x)$ is the
empirical density expressed in terms of the Dirac delta function:
\begin{equation}\label{eq:empdensity}
\rho(x) = \frac{1}{n} \sum_{i=1}^n \delta(x - x_i).
\end{equation}

We take $K(x, y)$ to be a generalized kernel, not necessarily of the \emph{convolution} type
\begin{equation} \label{eq:convolution-kernel}
  K_h(x, y)=\frac{1}{h} K\left(\frac{y - x}{h}\right):=K_h(y-x),
\end{equation}
adding the subscript $h$ when it is desirable to emphasize its dependence on
the bandwidth tuning parameter $h$. We require the kernel to possess a few essential properties.
\begin{itemize}
\item[P1:] {\it Normalization}. $\int K(x, y) dy = 1$ for every $x$. This ensures that $\hf(y)$ is normalized.
\item[P2:] {\it Finite displacement}. For any function $g(x)$ supported on a finite interval $[a, b]$,
  the function $g_K(y) \equiv \int K(x, y) g(x) dx$,
is supported on some interval, $[\ta,\tb]$, which is also
finite. %(Note that by P1, and if additionally P3 below is also satisfied,
%then $g_K(y)$ will be a density function.)
%  (we will later need the Weierstrass approximation theorem on $[{\tilde a}, {\tilde b}]$).
\end{itemize}
Although not essential, we list a few additional properties that may be useful/desirable. 
(P3) {\it Non-negativity}: $K(x, y) \ge 0$ for every $x$ and
$y$; ensures that the smoothed density is non-negative. (P4) {\it
  Double stochasticity}:  $\int K(x, y) dx = \int K(x, z) dx$ for
every $y$ and $z$; ensures that a uniform density remains uniform when
smoothed. (P5) {\it Bona fide smoother}: requires that for any probability density
  $g(x)$, the differential entropy of the density $g_K(y)$ in P2 is not
  decreasing in comparison with that of $g(x)$ \citep[ch.~8]{cover1991information}.

Early attempts at devising an EDoF formula include \cite{buja-etal-1989}, which
gives three possible definitions in the context of linear
smoothers. These ideas were echoed by  \cite{ye1998edof} and \citet[\S
7.6]{hastie-etal-2009esl} who proposed that the EDoF in a
general smoothing problem should be proportional to the sum of the covariances between
observed and fitted values. By analogy
with a regression model, the most
common variant reduces to the trace of the smoothing matrix.
This presents a problem for KDE since a design matrix is not
immediately apparent, however the recent paper by
\cite{ref:McCloud2020} purports to remedy that deficiency by
transforming the KDE to mimic a regression estimate.

Over twenty years earlier,  \cite{loader1996local} adapted the local
likelihood based smoothing idea of \cite{tibshirani1987local} to the
problem of density estimation. For a degree zero polynomial (local
constant fitting), Loader's extension coincides with KDE (under 
convolution kernels). In his monograph, \cite{ref:Loader1999} proposes
two definitions of EDoF for local likelihood density estimation, both based
on a linearization of the local likelihood estimation map. These
then lead directly to bandwidth selection rules based on information criteria such as
AIC.

Since the local constant version of local likelihood density
estimation coincides with KDE, the problem
of quantifying the EDoF is therefore essentially solved: we have
one version from \cite{ref:McCloud2020}, and two from
\cite{ref:Loader1999}. However, these versions are restrictive 
because they apply only under the convolution kernel setting, and
myopic due to the fact that they lack an oracle based formulation. Moreover,
the very fact that there are several versions implies that none of them
is definitive, thus opening up the possibility for further developments.   
In this paper we take a novel approach to the problem, by starting from the
orthogonal polynomial sequence (OPS) expansion for the ratio of the
empirical to the oracle density functions. Convolution with the kernel
then leads to a new OPS
with respect to which one may express the resulting KDE. The expansion
coefficients of the two OPS systems can then be related via a kernel
\emph{sensitivity} matrix, which essentially dictates how the
``noise'' in the empirical density is attenuated by the kernel
smoothing operation. This leads to a new plausible definition of
effective parameters by taking the trace of a symmetrized, positive
semi-definite, and normalized version.

This \emph{oracle} EDoF is
developed in Section~\ref{sec:oracle-edof}, with a corresponding empirical
version following by the plug-in principle in
Section~\ref{sec:emp-edof}. In Section~\ref{sec:connections} we
establish  connections among the versions of \cite{ref:Loader1999},
\cite{ref:McCloud2020}, and our empirical EDoF; while stressing that
ours is the only one that stems from an underlying oracle.
Section \ref{sec:amkld} establishes a connection between minimization
of Kullback–Leibler divergence and our EDoF.
Some examples where explicit calculations are possible are
provided in Sections \ref{sec-histogram} and \ref{sec-allgauss}, where
an explicit formula for EDoF is derived, resulting in a new normal
scale plug-in rule based on minimization of Kullback–Leibler divergence. 
Section \ref{sec-numerical} provides some numerical results on the performance of
the measures in controlled settings, and we conclude the paper with
an illustration on real data in Section \ref{sec-realdata}.

%%%%%%%%%%%%%%%%%%%%%%%%%%%%%%%%%%%%%%%%%%%%%%%%%%%%%%%%%%%%%%%%%%%%%%%%
\section{Effective Degrees of Freedom: Oracle Version}\label{sec:oracle-edof}
%%%%%%%%%%%%%%%%%%%%%%%%%%%%%%%%%%%%%%%%%%%%%%%%%%%%%%%%%%%%%%%%%%%%%%%%
As in the Introduction, let $f(x)$ be  the (oracle) density of $X$ supported on
the interval $[a,b]$. The  essence of our idea to devise an expression for EDoF, is to introduce an orthogonal
polynomial sequence (OPS) $\{P_j,\ j=0,1,\ldots\}$, that is orthonormal with
respect to a weight function comprised of $f(x)$ itself. This
leads to the representation:
\begin{equation}\label{eq:polydef}
  \int_a^b P_k(x) P_j(x) f(x) dx = \delta_{kj},
\end{equation}
where $\delta_{jk}$ denotes the Kronecker delta. Note that $P_0(x)=1$ and $P_j(x)$ is of degree $j$~\citep{ref:Gautschi}.
According to the Weierstrass approximation theorem, the OPS
defined by~\eqref{eq:polydef} represents a~complete basis for all continuous
square-integrable functions on $[a,b]$. For unbounded intervals, the OPS can usually be constructed if all
moments are finite, but the completeness property may or may not hold;
see~\cite{ref:Simon2010} for a~detailed exposition. To ensure completeness,
we initially restrict attention to compact intervals.

Expanding the ratio of the empirical density function to the oracle
density in terms of the OPS, we can formally write
  $\rho(x) = \sum_{i=1}^n \delta(x - x_i)/n =
              \sum_{k=0}^{\infty} c_k P_k(x)f(x)$, where 
\begin{equation}\label{eq:cj}
  c_j = \int_a^b \rho(x) P_j(x) dx = \frac{1}{n} \sum_{i=1}^n
  P_j(x_i),\quad \E (c_j) = \delta_{j0},\quad \Cov\left(c_k, c_j\right) = \frac{\delta_{kj} - \delta_{k0}\delta_{j0}}{n}.
\end{equation}
Now, note that from \eqref{eq:convolution} we can now write
$\hf(y) =\sum_{k=0}^{\infty} c_k \int_a^b K(x, y) P_k(x) f(x) dx$,
yielding an OPS expansion for the KDE in terms of
random zero-mean and uncorrelated coefficients (for $k\geq 1$ since
$c_0 = 1$). Introduce now an OPS with the weight
$f_K(y)$ defined by \eqref{eq:oracle-convolution}.
Assuming that $f_K(y)$ is supported on the interval $[{\tilde a}, {\tilde b}]$,
the polynomials $\{Q_j\}$ in this sequence are once again defined by the
orthonormality property
\begin{equation}\label{eq:polysfk}
  \int_{\tilde a}^{\tilde b} Q_k(y) Q_j(y) f_K(y) dy = \delta_{kj},
\end{equation}
with $Q_0(y) = 1$. In a similar fashion to what was done above, we can now expand \eqref{eq:convolution} in terms of this sequence:
\begin{equation}\label{eq:fhatrep}
  \hf(y) =  \left(\sum_{k=0}^{\infty} b_k Q_k(y) \right) f_K(y), \qquad  b_j = \int_{\tilde a}^{\tilde b} Q_j(y) \hf(y) dy = \sum_{k=0}^{\infty} c_k s_{jk},
\end{equation}
where the coefficients $\{b_k\}$ for $k\geq 1$ (since $b_0 = 1$) represent the {\it degrees of freedom} of $\hf(y)$, and  the $s_{jk}$ are the $(j,k)$ elements of the \textit{kernel sensitivity matrix} $S$, defined as:
\begin{equation}\label{eq:sensmat}
  s_{jk} = \int_a^b \int_{\tilde a}^{\tilde b} K(x, y) Q_j(y) P_k(x) f(x) dy dx.
\end{equation}
In terms of this matrix, the (infinite) vectors of coefficients $\{c_j\}$ and
$\{b_j\}$, representing the empirical density and the KDE respectively, are related
via $\vec{b} = S \vec{c}$. The term \emph{sensitivity} is used in the
sense of \cite{ye1998edof}, who proposes that a general definition of
EDoF for predicted or fitted values should measure the degree to which
these change in relation to small perturbations in observed values. 

Note that the covariance matrices of $\vec{b}$ and $\vec{c}$, $\Sigma_b$
and $\Sigma_c$, are related by $\Sigma_b = S \Sigma_c S^T$,
where the elements of $\Sigma_c$ are given by \eqref{eq:cj}. Integrating \eqref{eq:sensmat} over $y$ first with $Q_0(y) = 1$ and using the kernel
normalization property, we see that $s_{0k} = \delta_{0k}$. Similarly, $s_{j0} = \delta_{j0}$
(integrate over $x$ first, obtain $f_K(y)$,
and use \eqref{eq:polysfk}). This means that
both the first row and column of the kernel sensitivity
matrix are zero, except for the element $s_{00} = 1$. As $\V(c_0) = 0$,
this leads to 0 everywhere in the first row and column
of $\Sigma_b$. Thus, it suffices to focus attention on $\widetilde{S}$, the kernel sensitivity
matrix with the first row and column removed. Likewise, define
$\widetilde{\Sigma}_b$ and $\widetilde{\Sigma}_c$ to be corresponding
covariance matrices of $\vec{b}$ and $\vec{c}$ with the first row and column removed.

Borrowing the idea from  \cite{ye1998edof} and \citet[\S
7.6]{hastie-etal-2009esl} that the effective number of parameters in a
general smoothing problem is proportional to the sum of the covariances between
each observed value and the
corresponding  fitted value (heuristically embodied by the elements of
$\vec{c}$ and $\vec{b}$, respectively), leads to a possible definition of EDoF based on:
\[ \Cov(\vec{b},\vec{c}) = \Cov(S\vec{c},\vec{c}) = \frac{1}{n}S. \] 
A more precise definition would involve: (i) elimination of the
superfluous first row and column, (ii) multiplication by the transpose,
thus ensuring positive semi-definiteness, (iii) applying the trace operator,
thus summing all individual contributions, and (iv) normalizing, through scaling by the
covariance of $\vec{c}$. This leads to our proposed KDE-based estimate of the density $f$ with kernel $K$: 
\begin{equation}\label{eq:edof}
\text{EDoF} := \nu = n \tr(\widetilde{\Sigma}_b) = \tr(\widetilde{S}\widetilde{S}^T) = \sum_{j,k=1}^{\infty}s_{jk}^2= \sum_{j,k=0}^{\infty}s_{jk}^2-1.
\end{equation}
Viewing  $\widetilde{S}$ as a \emph{smoothing} matrix, the second trace formulation of this
expression also coincides with a definition for EDoF proposed earlier by
\cite{buja-etal-1989} in the context of linear smoothers. In words, the key idea is to consider the system as a collection
of uncorrelated ``noise sources'' (polynomial
terms in the empirical density expansion), and calculate EDoF by considering
how these noise sources are suppressed by smoothing.
While formally the matrix $\widetilde{S}$ is infinite-dimensional, in
practice elements with large row and column numbers appear to decay to zero quite quickly.
Therefore, for practical purposes it is sufficient to truncate $\widetilde{S}$
up to some reasonably low finite dimension. 

%\section{Definition Based on Comparison Density}
Instead of approximating $f$ with an OPS defined on $[\ta,\tb]$, an
interesting alternative to \eqref{eq:fhatrep} is obtained by mapping 
to the unit interval via the  $\{L_k\}$, Legendre polynomials
orthonormal on $[0, 1]$, so that
\begin{equation}\label{eq:fhatrepcd}
  \hf(y) =  \left(\sum_{k=0}^{\infty} b_k L_k(F_K(y)) \right) f_K(y),
\end{equation}
where $F_K(y)$ is the cumulative distribution function (cdf) of $f_K(y)$.
%\footnote{These polynomials are related to the ``standard'' Legendre polynomials
%$\{\mathcal{L}_k\}$ orthogonal on $[-1, 1]$, by the transformation
%$L_k(z) = \sqrt{2 k + 1}\mathcal{L}_k(2 z - 1)$.}.
The expression in
parentheses of \eqref{eq:fhatrepcd} is known as the \textit{comparison
  density}~\citep[Ch.~2]{ref:Thas2010}. Performing the transformation
$z=F_K(y)$, and noting that $L_0(z) = 1$ and $b_0 = \int_{\tilde
  a}^{\tilde b} \hf(y) dy = 1$, the equation for $b_j$ in \eqref{eq:fhatrep} becomes
\begin{equation}\label{eq:bj-alt}
  b_j = \int_0^1 L_j(z) \frac{\hf(F^{-1}_K(z))}{f_K(F^{-1}_K(z))} dz =  \int_{\tilde a}^{\tilde b} L_j(F_K(y)) \hf(y) dy.
\end{equation}
Paralleling the development in~\eqref{eq:fhatrep} and~\eqref{eq:sensmat}, the elements of the
kernel sensitivity matrix can be alternatively defined by substituting $Q_j(y)=L_j(F_K(y))$:
\begin{equation}  \label{eq:altsensmat}
  s_{jk} = \int_a^b \int_{\tilde a}^{\tilde b} K(x, y) L_j(F_K(y)) P_k(x) f(x) dy dx.
\end{equation}
Using this expression,
we can relax the finite displacement kernel property P2,
as long as it can be assumed that the ``actual'' comparison density
in \eqref{eq:bj-alt}
is continuous and finite for all $z \in [0, 1]$,
so that the Weierstrass approximation theorem holds. Now note that
for this definition the results $s_{0k} = \delta_{0k}$ and
$s_{j0} = \delta_{j0}$  remain true, so that the corresponding
EDoF can still be defined by \eqref{eq:edof}.

Since \eqref{eq:altsensmat} involved making a change in the $y$ variable of
\eqref{eq:sensmat}, a similar switch to the comparison density can be made
for the $x$ variable instead via the
transformation $z=F(x)$ (the cdf of $X$).
Therefore, two additional definitions of the kernel sensitivity matrix
are given by
\begin{equation}\label{eq:sxlegendre}
   s_{jk} = \int_a^b \int_{\tilde a}^{\tilde b} K(x, y) Q_j(y) L_k(F(x)) f(x) dy dx,
\end{equation}
and
\begin{equation}\label{eq:salllegendre}
  s_{jk} = \int_a^b \int_{\tilde a}^{\tilde b} K(x, y) L_j(F_K(y)) L_k(F(x)) f(x) dy dx.
\end{equation}
While it still makes sense to define EDoF resulting from these four different
versions of $\widetilde{S}$ according to \eqref{eq:edof}, it is not
immediately obvious whether they will result in the same value
for $\nu$. The generalized Parseval Identity of Lemma
\ref{the:General-Parseval-Identity} settles this issue in the affirmative, by
noting that the four versions of $s_{jk}$ given by \eqref{eq:sensmat}, \eqref{eq:altsensmat},
\eqref{eq:sxlegendre}, and \eqref{eq:salllegendre}, are of
 the type in \eqref{eq:sjk-general} with $g(x,y)=K^2(x, y)/f_K(y)$.

%\newpage
\begin{lemma}[Bivariate Parseval Identity]\label{the:General-Parseval-Identity}
Let $g(x,y)$ be a continuous real-valued function
supported on the Cartesian product of the closed sets $\mX$ and $\mY$,
and let $\{P_k(x)Q_k(y)\}$, $k=0,1,2,\ldots$, be a complete bivariate OPS 
which is orthonormal with respect to~respective
weighting functions $dF_P(x)=f_P(x)dx$ and $dF_Q(y)=f_Q(y)dy$, with 
corresponding supports $\mX$ and $\mY$.   Consider the expression
\begin{equation}\label{eq:sjk-general}
  s_{jk} = \int_{\mY}\int_{\mX} g(x,y) P_j(x)Q_k(y) dF_P(x)
  dF_Q(y).%,\qquad g(x,y)=\frac{K(x, y)^2}{f_K(y)},
\end{equation}
Then, we have the bivariate Parseval Identity:
\begin{equation}\label{eq:app-General-Parseval-Identity}
\int_{\mY} \int_{\mX} g^2(x,y)dF_P(x) dF_Q(y) = \sum_{j,k=0}^{\infty}s_{jk}^2.
\end{equation}
\end{lemma}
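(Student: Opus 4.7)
The plan is to recognize that $s_{jk}$ in \eqref{eq:sjk-general} is precisely the $(j,k)$-th Fourier coefficient of $g$ with respect to the complete orthonormal system $\{P_j(x) Q_k(y)\}_{j,k \geq 0}$ in the product Hilbert space $L^2(\mX \times \mY, dF_P \otimes dF_Q)$, whereupon \eqref{eq:app-General-Parseval-Identity} becomes the standard Parseval identity for $g$ in this basis.

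First I would verify that the product system is orthonormal. By Fubini and the one-dimensional relations $\int_\mX P_j P_{j'} dF_P = \delta_{jj'}$ and $\int_\mY Q_k Q_{k'} dF_Q = \delta_{kk'}$, we have
\begin{equation*}
\int_\mY \int_\mX P_j(x) Q_k(y) P_{j'}(x) Q_{k'}(y) dF_P(x) dF_Q(y) = \delta_{jj'}\, \delta_{kk'},
\end{equation*}
so $\{P_j \otimes Q_k\}$ is orthonormal in $L^2(dF_P \otimes dF_Q)$; its completeness is assumed as part of the hypothesis (equivalently, it follows from completeness of each univariate factor via a Stone--Weierstrass tensor-product density argument on the compact product support). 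Next, since $g$ is continuous on the (assumed compact) product $\mX \times \mY$, it lies in $L^2(dF_P \otimes dF_Q)$, and the definition \eqref{eq:sjk-general} identifies $s_{jk}$ as the inner product $\langle g,\, P_j \otimes Q_k \rangle$. Applying Parseval in this basis then yields
\begin{equation*}
\int_\mY \int_\mX g^2(x,y)\, dF_P(x)\, dF_Q(y) = \sum_{j,k=0}^{\infty} s_{jk}^2,
\end{equation*}
which is \eqref{eq:app-General-Parseval-Identity}.

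The proof is thus essentially an exercise in interpreting \eqref{eq:sjk-general} as a Fourier coefficient in the correct Hilbert space. The only delicate point is ensuring $g \in L^2(dF_P \otimes dF_Q)$ so that Parseval applies; under the stated continuity-on-closed-sets hypothesis, read as continuity on a compact product support, this is immediate. In the intended application with $g(x,y) = K^2(x,y)/f_K(y)$ on possibly unbounded supports (e.g.\ the Gaussian setting of Section~\ref{sec-allgauss}), square-integrability would need to be verified case by case, but for the lemma as stated the hypotheses render the argument routine.
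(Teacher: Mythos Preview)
Your proposal is correct and is essentially the same argument as the paper's: both identify $s_{jk}$ as the Fourier coefficient of $g$ in $L^2(dF_P\otimes dF_Q)$ with respect to the complete orthonormal tensor basis $\{P_j\otimes Q_k\}$, invoke Stone--Weierstrass for completeness on the compact product, and then apply the standard Parseval identity. The paper phrases this via the truncated approximant $\widehat g_N=\sum_{j,k\le N}s_{jk}P_jQ_k$ converging to $g$ in $L^2$, but the content is identical.
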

\begin{proof}
See appendix. 
\end{proof}

The final issue to settle in this section is to obtain a more
closed-form expression for $\nu$, i.e., one that might be more
amenable to explicit computation.

%\newpage
\begin{theom}\label{the:edof-nu}
Let $f(x)$ be the density function of the continuous random variable
$X$ supported on the compact interval $[a,b]$, and $f_K(y)$ its
regularization in \eqref{eq:oracle-convolution}, supported on
$[\ta,\tb]$, by convolution with a kernel $K(x,y)$ satisfying properties P1 and P2. Let further $\{P_j(x)\}$ and $\{Q_j(y)\}$
be the OPS systems with respect to the weight functions $f(x)$ and
$f_K(y)$ respectively, and which satisfy the completeness condition of
Lemma \ref{the:General-Parseval-Identity}. Consider the expression for $\nu$ defined by \eqref{eq:edof}, where
the elements of the sensitivity matrix are given by
\eqref{eq:sensmat}. Then, we have that:
\begin{equation} \label{eq:app-directOracleEdof}
\nu = \int_a^b \int_{\tilde a}^{\tilde b} \frac{K^2(x, y)}{f_K(y)} f(x) dy dx - 1.
\end{equation}
\end{theom}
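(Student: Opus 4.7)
The plan is to recognize the statement as a direct corollary of Lemma~\ref{the:General-Parseval-Identity}, and the whole job reduces to putting $s_{jk}$ into the standardized form~\eqref{eq:sjk-general} with the right choice of $g(x,y)$.

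First, I would rewrite the sensitivity coefficient $s_{jk}$ from~\eqref{eq:sensmat} by multiplying and dividing by $f_K(y)$:
\begin{equation*}
s_{jk} = \int_a^b \int_{\tilde a}^{\tilde b} \frac{K(x,y)}{f_K(y)}\, P_k(x)\, Q_j(y)\, f(x)\, f_K(y)\, dy\, dx.
\end{equation*}
This matches the template~\eqref{eq:sjk-general} with $g(x,y) = K(x,y)/f_K(y)$, weighting densities $f_P(x)=f(x)$ on $[a,b]$ and $f_Q(y)=f_K(y)$ on $[\tilde a,\tilde b]$, and the bivariate OPS $\{P_k(x)Q_j(y)\}$ that is already in force by hypothesis.

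Next, I would invoke Lemma~\ref{the:General-Parseval-Identity} to conclude
\begin{equation*}
\sum_{j,k=0}^{\infty} s_{jk}^2 \;=\; \int_{\tilde a}^{\tilde b} \int_a^b \frac{K^2(x,y)}{f_K^2(y)}\, f(x)\, f_K(y)\, dx\, dy \;=\; \int_a^b \int_{\tilde a}^{\tilde b} \frac{K^2(x,y)}{f_K(y)}\, f(x)\, dy\, dx,
\end{equation*}
where the last step is Fubini (the integrand is nonnegative). Combining this with the definition $\nu = \sum_{j,k=0}^{\infty} s_{jk}^2 - 1$ from~\eqref{eq:edof} yields exactly \eqref{eq:app-directOracleEdof}.

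The only technical point worth mentioning is the legitimacy of setting $g = K/f_K$: on the support $[\tilde a,\tilde b]$ of $f_K$ the denominator is positive by construction, so $g$ is well-defined and continuous whenever $K$ is, which is the mild regularity assumption inherited from the completeness setup of Lemma~\ref{the:General-Parseval-Identity}. The compactness of $[a,b]$ together with property P2 guarantees $[\tilde a,\tilde b]$ is also compact, so the closed-set hypotheses of the lemma hold. No further work is required; the main obstacle (establishing the Parseval identity itself) has already been offloaded to the lemma, so the proof is essentially a one-line algebraic recognition followed by invocation.
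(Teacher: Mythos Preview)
Your proof is correct and takes essentially the same approach as the paper: both recognize $s_{jk}$ as a coefficient of the form~\eqref{eq:sjk-general} with $g=K/f_K$ and then invoke Lemma~\ref{the:General-Parseval-Identity}. The only cosmetic difference is that the paper first passes to the Legendre version~\eqref{eq:salllegendre} via the substitutions $w=F(x)$, $z=F_K(y)$ and applies the lemma on $[0,1]^2$ with Lebesgue weights, whereas you apply the lemma directly to~\eqref{eq:sensmat} with weights $f$ and $f_K$; your route is slightly more direct and avoids the unnecessary change of variables.
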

\begin{proof}
See the Appendix.
\end{proof}

%%%%%%%%%%%%%%%%%%%%%%%%%%%%%%%%%%%%%%%%%%%%%%%%%%%%%%%%%%%%%%%%%%%%%%%%
\section{Effective Degrees of Freedom: Empirical Version}\label{sec:emp-edof}
%%%%%%%%%%%%%%%%%%%%%%%%%%%%%%%%%%%%%%%%%%%%%%%%%%%%%%%%%%%%%%%%%%%%%%%%
Without knowledge of the oracle density $f(x)$, we can apply the same
procedure in which $f(x)$ is replaced everywhere by the empirical density
$\rho(x)$. This means that the OPS $\{P_j\}$ will be orthogonal with the
weight function $\rho(x)$, and consequently, the OPS $\{Q_j\}$ will be orthogonal with the
weight function $\hf(y)$, which is the empirical replacement for $f_K(y)$. Otherwise, the definition of the kernel sensitivity matrix
remains unchanged. The net effect of this is that the empirical EDoF,
$\hnu$, becomes simply the plug-in estimate for \eqref{eq:app-directOracleEdof}:
\begin{equation}\label{eq:directSampleEdof}
\hnu = \frac{1}{n} \sum_{i=1}^n \int_{\tilde a}^{\tilde b} \frac{K^2(x_i, y)}{\hf(y)} dy - 1.
\end{equation}
%In this case EDoF becomes a sample-dependent statistic.

For the asymptotics and all results below, we now make it explicit
that the kernel is parametrized by bandwidth $h$. 
%\[
%K(x,y) = \frac{1}{h}K\left(\frac{y-x}{h}\right) := K_h(y-x).
%\]
Denote the oracle and empirical versions of $\theta=1+\nu$ also with the
subscript $h$:
\begin{equation} \label{eq:OracleEdof-h}
  \theta_h = 1+\nu_h = \int_{\ta}^{\tb} \left\{\frac{\int_a^b
      K_h^2(x,y)f(x) dx}{\int_a^b K_h(x,y)f(x)dx}\right\}  dy =
  \int_{\ta}^{\tb}\frac{\E K_h^2(X,y)}{\E K_h(X,y)}dy = \int_{\ta}^{\tb}\frac{\mu_{2,h}(y)}{\mu_{1,h}(y)}dy,
\end{equation}
and
\begin{equation} \label{eq:EmpOracleEdof-h}
  \hta_h = 1+\hnu_h = \int_{\ta}^{\tb}\frac{\hmu_{2,h}(y)}{\hmu_{1,h}(y)}dy,
\end{equation}
where we use the notation: $\mu_{j,h}(y) =  \E K^j_h(X,y)$ and $\mu_{jk,h}(w, z) = \E K_h^j(X,w) K_h^k (X,z)$,
with corresponding empirical counterparts:
\[
\hmu_{j,h}(y) = \frac{1}{n}\sum_{i=1}^n K_h^j(x_i,y), \qquad\text{and}\qquad \hmu_{jk,h}(w, z) = \frac{1}{n}\sum_{i=1}^n K_h^j(x_i,w) K_h^k (x_i,z).
\]
\begin{remark}
Note that $\mu_{1,h}(y)\equiv f_K(y)$, and if $f(x)$ is supported on $[a,b]$, then $\mu_{j,h}(y)$ is
supported on $[\ta,\tb]$, with $\lim_{h\rightarrow 0}
[\ta,\tb]=[a,b]$. 
\end{remark}

To get a
finite sample approximation to the uncertainty of $\hta_h$, we can take
the limit as only $n\rightarrow\infty$, while $h$ is fixed. This is
easiest to derive using influence functions (IFs) and the
nonparametric delta method~\cite[Ch~2]{ref:Wasserman2006}. In order to
proceed, define the following linear functionals, and their
corresponding IFs:
\begin{alignat*}{2}
%  t_1(F;y) &= \mu_{1,h}(y) = \int K_h(x,y)dF(x), & \qquad L_1(x;y) &= K_h(x,y)-t_1(F;y) \\ 
t_j(F;y) &= \mu_{j,h}(y) = \int K_h^j(x,y)dF(x), & \qquad  L_j(x,y) &=
K_h^j(x,y)-t_j(F;y), \quad j=1,2, \\ 
t_3(F) &= \int\log\mu_{1,h}(y)dF(y), & \qquad L_3(x) &=
           \log\mu_{1,h}(x) - t_3(F).  
\end{alignat*}
Note that both $t_1(F;y)$ and $t_2(F;y)$ depend on $y$.  For such
statistical functionals which contain also an argument $y$, $t(F;y)$, \cite{ref:KimScott2012} define the IF at $x$ as the following Gateaux derivative:
\begin{equation}\label{eq:influence-function}
L(x;y) =
\frac{\partial}{\partial\epsilon}\left.t\left((1-\epsilon)F+\epsilon H_x;y \right) \right|_{\epsilon=0},
\end{equation}
where $H_x$ denotes the Heaviside step function at
$x$. The empirical version, $\ell(x;y)$, is obtained by replacing
$F$ with the empirical cdf $\hf$. The following theorem derives the IF for
$t_4(F)=\theta_h$, and the accompanying limit theorem under the
$n\rightarrow\infty$ regime. 
\begin{theom}[IF and CLT for $\hta_h$]\label{the:clt-effdf} The IF for
  $t_4(F)=\theta_h$ is:
  \begin{equation*}
  L_4(x) =  \int_{\ta}^{\tb}L_4(x;y)dy, \qquad L_4(x;y) =
  \frac{K_h^2(x,y)}{\mu_{1,h}(y)}-\frac{\mu_{2,h}(y)}{\mu_{1,h}^2(y)}K_h(x,y). 
\end{equation*} 
Moreover, under the $n\rightarrow\infty$ regime (with $h$ fixed),
and provided $\omega^2_h<\infty$, we have the following asymptotic
normality result:
\begin{equation*}
 \sqrt{n}(\hta_h-\theta_h) \convd
 \mN\left(0,\omega^2_h\right),\qquad \omega^2_h = \int_{\ta}^{\tb}\int_{\ta}^{\tb}\frac{\gamma(w,z)}{\mu_{1,h}(w)\mu_{1,h}(z)}dwdz,
\end{equation*}
where
\[
\gamma(w,z) = \mu_{22,h}(w,z)-\frac{\mu_{2,h}(w)}{\mu_{1,h}(w)}\mu_{12,h}(w,z)-\frac{\mu_{2,h}(z)}{\mu_{1,h}(z)}\mu_{21,h}(w,z)+\frac{\mu_{2,h}(w)\mu_{2,h}(z)}{\mu_{1,h}(w)\mu_{1,h}(z)}\mu_{11,h}(w,z).
\]
\end{theom}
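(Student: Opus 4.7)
The strategy has two pieces: first derive the IF $L_4$ by applying the quotient/chain rule for Gateaux derivatives to the functional $t_4(F)=\int_{\ta}^{\tb} t_2(F;y)/t_1(F;y)\,dy$, then invoke the nonparametric delta method to conclude that $\sqrt{n}(\hta_h-\theta_h)$ is asymptotically normal with variance $\V[L_4(X)]$, and finally expand that variance into the announced double-integral expression for $\omega_h^2$.

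For the IF, I would note that $t_j(F;y)=\int K_h^j(x,y)\,dF(x)$ is linear in $F$, so the Gateaux derivative \eqref{eq:influence-function} at $F$ in the direction $H_x-F$ is simply $L_j(x;y)=K_h^j(x,y)-\mu_{j,h}(y)$. Writing $\theta_h=\int_{\ta}^{\tb} R(F;y)\,dy$ with $R(F;y)=t_2(F;y)/t_1(F;y)$, and exchanging $\partial/\partial\epsilon$ with $\int\,dy$ (justified by dominated convergence, since $\mu_{1,h}(y)$ is bounded below on $[\ta,\tb]$), the quotient rule yields
\[
L_4(x;y) = \frac{L_2(x;y)}{\mu_{1,h}(y)} - \frac{\mu_{2,h}(y)}{\mu_{1,h}^2(y)}\,L_1(x;y).
\]
Substituting $L_j(x;y)=K_h^j(x,y)-\mu_{j,h}(y)$, the two constant-in-$x$ terms $-\mu_{2,h}(y)/\mu_{1,h}(y)$ and $+\mu_{2,h}(y)/\mu_{1,h}(y)$ cancel, leaving exactly the stated $L_4(x;y)$; integrating over $y$ gives $L_4(x)$. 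A direct check, $\E L_4(X;y)=\mu_{2,h}(y)/\mu_{1,h}(y)-\mu_{2,h}(y)/\mu_{1,h}(y)=0$, confirms $\E L_4(X)=0$.

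For the CLT, I would apply the nonparametric delta method (cf.\ \cite{ref:Wasserman2006}, Ch.~2) to obtain the von Mises expansion $\sqrt{n}(\hta_h-\theta_h)=n^{-1/2}\sum_{i=1}^n L_4(x_i)+o_p(1)$, so that the classical CLT delivers the $\mN(0,\V[L_4(X)])$ limit. For the variance, I would write $L_4^2(X)=\int_{\ta}^{\tb}\int_{\ta}^{\tb} L_4(X;w)L_4(X;z)\,dw\,dz$ and use Fubini (valid under $\omega_h^2<\infty$) to bring $\E$ inside the double integral. Expanding $\E[L_4(X;w)L_4(X;z)]$ yields four product terms; identifying each with the appropriate $\mu_{jk,h}(w,z)$ and factoring out $1/[\mu_{1,h}(w)\mu_{1,h}(z)]$ reassembles $\gamma(w,z)$ exactly as stated, completing the identification $\V[L_4(X)]=\omega_h^2$.

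The main obstacle is the technical justification of the delta-method step, since $t_4$ is a ratio-of-integrals functional and the usual von Mises machinery requires Hadamard (or Fr\'echet) differentiability at $F$ together with uniform control of the remainder. The key regularity is that $\mu_{1,h}(y)=f_K(y)$ is bounded away from $0$ on the compact support $[\ta,\tb]$ (ensured by properties P1--P2 together with $h$ fixed), which keeps the map $F\mapsto 1/t_1(F;y)$ smooth in a $\sqrt{n}$-neighborhood of $F$; a uniform Taylor expansion of $\hmu_{2,h}(y)/\hmu_{1,h}(y)$ about $(\mu_{2,h}(y),\mu_{1,h}(y))$, combined with Glivenko--Cantelli-type control of $\sup_y|\hmu_{j,h}(y)-\mu_{j,h}(y)|$ for $j=1,2$, then drives the remainder to $o_p(n^{-1/2})$ after integration over $y$. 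The finiteness assumption $\omega_h^2<\infty$ furnishes the second-moment condition needed for both Fubini and the Lindeberg step of the CLT.
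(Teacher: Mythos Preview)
Your proposal is correct and follows essentially the same route as the paper's own proof: compute the IF of the ratio $r(y)=\mu_{2,h}(y)/\mu_{1,h}(y)$ via the chain rule from the linear IFs $L_1,L_2$, interchange the Gateaux derivative with $\int_{\ta}^{\tb}dy$ to obtain $L_4(x)$, and then invoke the nonparametric delta method with $\omega_h^2=\E_F L_4(X)^2=\int\!\!\int \E_F[L_4(X;w)L_4(X;z)]\,dw\,dz$. Your write-up is in fact more detailed than the paper's (the explicit cancellation of the constant terms, the mean-zero check, the four-term identification of $\gamma$, and the discussion of Hadamard differentiability and remainder control are all elaborations beyond what the paper records), but the underlying argument is the same.
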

\begin{proof}
See the Appendix.
\end{proof}

%\cb
%now case n and h
%\nc

%%%%%%%%%%%%%%%%%%%%%%%%%%%%%%%%%%%%%%%%%%%%%%%%%%%%%%%%%%%%%%%%%%%%%%%%
\section{Connections With Other Versions of Empirical Effective
  Degrees of Freedom}\label{sec:connections}
%%%%%%%%%%%%%%%%%%%%%%%%%%%%%%%%%%%%%%%%%%%%%%%%%%%%%%%%%%%%%%%%%%%%%%%%
%\section{Effective KDE Parameters According to Clive Loader}
As far as we can ascertain, there are only two references dealing with EDoF
explicitly for KDE, namely \citet[Ch.~5]{ref:Loader1999}, and
\cite{ref:McCloud2020}, neither of them postulating an oracle
version. The approach of \citet[Ch.~5]{ref:Loader1999}  is to approximate the
local likelihood cross-validation criterion involving the sum of
leave-one-out log-density estimates, by the sum of log-density
estimates minus an additional term, which naturally assumes the role of
a penalty factor in an information criterion based paradigm (e.g., AIC, BIC).
The idea of \cite{ref:McCloud2020} is to transform the usual KDE in
order to mimic a regression estimate, thus allowing for the extraction
of a hat (or projection) matrix whose trace can be used to define the
EDoF. There are close parallels between these developments and our
proposed empirical EDoF in \eqref{eq:directSampleEdof}, as we will now
demonstrate. 

\citet[ch.~5]{ref:Loader1999} defines two kinds of empirical EDoF, one
based on the influence function, the other on the variance function:
$\hnu_1 = \sum_{i = 1}^n \mbox{infl}(x_i)$, and $\hnu_2 = \sum_{i = 1}^n \mbox{vari}(x_i)$.
Since local likelihood density estimation of order zero, the so-called
``local constant'' approximation, coincides with boundary-corrected KDE, these versions are
directly applicable here. We now attempt to reformulate them for generalized kernels. The expression immediately below 
\citet[eqtn.~(5.14)]{ref:Loader1999}  states that
\begin{equation}\label{eq:infllogs}
  \sum_{i = 1}^n \log\,f_{-i}(x_i) \approx \sum_{i = 1}^n \log\,\hf(x_i) -  \sum_{i = 1}^n \mbox{infl}(x_i) + 1,
\end{equation}
where $\hf_{-i}(x_i)$ denotes the density estimate at $x_i$ with this
observation deleted. We can therefore determine $\mbox{infl}(x_i)$ by examining 
the difference $\log\hf(x_i)-\log\hf_{-i}(x_i)$.
Starting from $\hf(y)$ in \eqref{eq:convolution}, note that
\[
\hf_{-i}(y) = \frac{1}{n - 1} \sum_{\substack{j=1 \\ j\neq i}}^n K(x_j, y) = \frac{n}{n-1} \left[\hf(y) - \frac{1}{n} K(x_i, y)\right],
\]
whence $\hf(y) - \hf_{-i}(y) = (K(x_i, y) - \hf(y))/(n - 1)$.
Using a first order Taylor expansion of $\log\hf_{-i}(x_i)$ about
$\log\hf(x_i)$, we obtain ultimately that:
\[
  \sum_{i=1}^n \log\hf(x_i) - \sum_{i=1}^n \log\hf_{-i}(x_i) \approx \frac{1}{n-1} \sum_{i=1}^n \frac{K(x_i,x_i)}{\hf(x_i)} - \frac{n}{n-1}.
\]
According to \eqref{eq:infllogs} this should equal $\hnu_1 - 1$, and therefore
\begin{equation*}
  \hnu_1 = \frac{1}{n-1} \sum_{i=1}^n \frac{K(x_i,x_i)}{\hf(x_i)} + \left(1-\frac{n}{n-1}\right).
\end{equation*}
Dropping the term in parentheses (which is essentially zero and
otherwise inconsequential for the purposes of kernel tuning parameter
selection), and multiplying the result by the asymptotically
innocuous factor $(n - 1)/n$, yields
\begin{equation}\label{eq:nu1n}
  \hnu_1 \approx \frac{1}{n} \sum_{i=1}^n
  \frac{K(x_i,x_i)}{\hf(x_i)} = \hnu_3,
\end{equation}
where $\hnu_3$ is exactly the EDoF proposed by~\cite{ref:McCloud2020}.

According to \cite{ref:Loader1999} (and in particular the discussion
in \S~5.4.3),
$\mbox{vari}(x_i)$ used in his definition of $\hnu_2$ is actually the
variance of $\log\hf(x_i)$, i.e.,  $\V\log\hf(y)$ evaluated at
$y=x_i$. Assuming $\hf(x_i)>0$, the Delta
Method gives the following approximation:
\begin{equation} \label{eq:variX}
  \mbox{vari}(x_i) = \V\log\hf(x_i) \approx \frac{\V\hf(x_i)}{\left[\E\hf(x_i)\right]^2}.
\end{equation}
Now, seeking expressions for the first two moments of $\hf(x_i)$, note
that $\E\hf(y)=f_K(y)$, and
\begin{eqnarray*}%\label{eq:tmp1}
  \E\hf^2(y) &=& \frac{1}{n^2}\E\left[ \sum_{i,j=1}^n K(x_i, y)
                     K(x_j, y) \right] =\frac{1}{n^2}  \left[ n  \int K^2(x, y)f(x) dx + (n^2 - n) f_K^2(y)\right],
\end{eqnarray*}
which leads to the simplification $n\V\hf(y)=\int K^2(x, y)f(x) dx-f_K^2(y)$.
%\begin{equation*}%\label{eq:tmp2}
%\V\hf(y) = \frac{1}{n} \left[\int K^2(x, y)f(x) dx - f_K^2(y)\right].
%\end{equation*}
As $f(x)$ is unknown, further approximations must be made. In the
``local constant'' estimation paradigm, \cite{ref:Loader1999} makes the approximations:
$\E\hf(y) \approx \hf(y)$, and $\int K^2(x, y) f(x) dx \approx \hf(y) \int K^2(x, y) dx$,
whence \eqref{eq:variX} becomes
\[
  \mbox{vari}(x_i) \approx \frac{1}{n \hf(x_i)} \int K^2(x, x_i) dx - \frac{1}{n}.
\]
This results in an expression for $\hnu_2$ that is similar to our
proposed empirical EDoF:
\begin{equation}\label{eq:nu2n}
  \hnu_2 \approx \frac{1}{n} \sum_{i=1}^n
  \int\frac{K^2(x,x_i)dx}{\hf(x_i)}-1 \approx \frac{1}{n} \sum_{i=1}^n
  \int \frac{K^2(x_i, y)}{\hf(y)} dy - 1 = \hnu
\end{equation}
The formula given in \cite{ref:Loader1999} actually includes
additionally a boundary correction factor which does not appear here
(see e.g., Loader's Example 5.10). Instead,
we assume that $K(x, y)$ is already boundary-corrected (see Section~\ref{sec-numerical}).

The results in this section lead to plausible oracle versions for
$\hnu_1$ and $\hnu_3$. From \eqref{eq:nu1n}, we conclude
\begin{equation}\label{eq:nu3-oracle}
\nu_3 = \E\left(\frac{K(X,X)}{f_K(X)}\right) = \nu_1,
\end{equation}  
A corresponding result for $\nu_2$ does not however appear to be obvious.

%%%%%%%%%%%%%%%%%%%%%%%%%%%%%%%%%%%%%%%%%%%%%%%%%%%%%%%%%%%%%%%%%%%%%%%%
\section{Proof of Concept: the Histogram}\label{sec-histogram}
%%%%%%%%%%%%%%%%%%%%%%%%%%%%%%%%%%%%%%%%%%%%%%%%%%%%%%%%%%%%%%%%%%%%%%%%
In this section we show that the proposed EDoF measure leads to a
natural definition for the ``degrees of freedom'' when the ``model''
is a histogram. 
%\cb [*** Igor prefers to remove this ***] It is well-known that usage of the box kernel, a uniform distribution
%on $(-1/2,1/2)$, results in the so-called \emph{centered} histogram,
%where the bins are centered at each support point where the the density
%estimate is evaluated~\citep[Ch.~12]{hollander2013nonparametric}. \nc
To obtain a histogram on the interval $[a, b]$ with $m$ bins
of variable widths $h_j$, define bin edges $b_0 = a$, $b_j = b_0 +
\sum_{k=1}^j h_k$, for $j=1,\ldots,m$, so that $b_m=b$ and bin $j$ consists of the
interval $B_j=[b_{j-1}, b_j)$. The histogramming kernel is then given
by: $K(x, y) = \sum_{j = 1}^m I(x \in B_j)I(y \in B_j)/h_j$,
%\end{equation*}
where $I(x\in A)$ is the indicator function for set
$A$. Straightforward calculations then lead to the following results.
To compute our $\nu$ in \eqref{eq:app-directOracleEdof}, first notice
that
\[
  f_K(y) = \sum_{j=1}^m\frac{1}{h_j}I(y \in B_j)\int I(x \in B_j)f(x)dx =
  \sum_{j=1}^m \frac{p_j}{h_j}I(y\in B_j),
\]
where $p_j=P(X\in B_j)$. Calculation of the inner integral now gives:
\begin{align*}
	\int \frac{K^2(x,y)}{f_K(y)}dy &= \sum_{j=1}^m I(x\in B_j)
                                         \int_{B_j}\frac{I(y\in
                                         B_j)/h_j^2}{\sum_{k=1}^m
                                         p_kI(y\in B_k)/h_k}dy = \sum_{j=1}^m \frac{1}{p_j}I(x\in B_j).
\end{align*}
Finally, we obtain our oracle EDoF:
\begin{align*}
\nu &= \int \sum_{j=1}^m \frac{1}{p_j}I(x\in B_j) f(x)dx - 1 = \sum_{j=1}^m
   \frac{1}{p_j}\int I(x\in B_j)f(x)dx-1 = m-1.
\end{align*}

In an analogous way, we can obtain the empirical version, $\hnu$,
according to the definition in \eqref{eq:directSampleEdof}. Once
again, we start with the denominator in the definition,
\begin{align*}
 \hf(y) &= \frac{1}{n}\sum_{i'=1}^n K(x_{i'},y) = \frac{1}{n}\sum_{i'=1}^n\sum_{j=1}^m\frac{1}{h_j} I(x_{i'}\in B_j)I(y\in B_j)= \frac{1}{n}\sum_{j\in\jphi}\frac{n_j}{h_j}I(y\in B_j),
\end{align*}
where $n_j=\sum_{i=1}^nI(x_i\in B_j)$ is the number of observations
that fall in bin $j$, and $\jphi$ is the index set of all non-empty bins,
i.e., $\jphi=\{j=1,\ldots,m : n_j>0\}$. The ``inner'' integral analog
of the previous derivation now gives:
\begin{align*}
  \int \frac{K^2(x_i,y)}{\hf(y)}dy &= \sum_{j\in\jphi} \int
                                     \frac{I(y\in
                                     B_j)/h_j^2}{\frac{1}{n}\sum_{k\in\jphi}\frac{n_k}{h_k}I(y\in
                                     B_k)}dy\;I(x_i\in B_j) = \sum_{j\in\jphi} \frac{n}{n_j} I(x_i\in  B_j). 
\end{align*}
Thus, our empirical EDoF yields the value:
\begin{equation*}
\hnu+1 = \frac{1}{n}\sum_{i=1}^n\sum_{j\in\jphi} \frac{n}{n_j} I(x_i\in
B_j) = \sum_{j\in\jphi} \frac{n_j}{n_j} = \hatm,
\end{equation*}
where $\hatm$ is the cardinality of $\jphi$, i.e., the number of
non-empty bins.

Let us now compute the measure $\hnu_3$
of~\cite{ref:McCloud2020}. Starting with the above expression for
$\hf(y)$, note that evaluation at $x_i$ gives
\[
\hf(x_i) = \frac{1}{n}\sum_{j=1}^m\frac{I(x_i\in
  B_j)}{h_j}\sum_{i'=1}^n I(x_{i'}\in B_j) = \frac{1}{n}\sum_{j=1}^m\frac{n_j}{h_j}I(x_i\in
  B_j),
\]
so that
\[
\frac{1}{n}\frac{K(x_i,x_i)}{\hf(x_i)} = \frac{ \sum_{j=1}^mI(x_i\in
  B_j)/h_j}{\sum_{j=1}^mn_jI(x_i\in B_j)/h_j}.
\]
Summing this over all $i=1,\ldots,n$ leads to the result that $\hnu_3$ is exactly
the number of non-empty bins:
\[
\hnu_3 = \sum_{i=1}^n\frac{1}{n}\frac{K(x_i,x_i)}{\hf(x_i)} =
\sum_{j\in\jphi}\frac{n_j/h_j}{n_j/h_j} =\hatm.
\]

Near-identical calculations verify the expected result that the oracle
version of~\cite{ref:McCloud2020} yields $\nu_3=m$.
One can argue that while $\nu_3=m$ and $\nu=m-1$ are essentially
identical, the latter is the correct one, since it takes into account
the constraint that the area should sum to 1, leaving exactly $m-1$ ``degrees of freedom''.
%, once the endpoint
%bins with boundaries at $b_0=a$ and $b_m=b$ are in place, there  are
%exactly $m-1$ ``degrees of freedom'' in choosing the $m-1$ intervening
%bin edges.

%%%%%%%%%%%%%%%%%%%%%%%%%%%%%%%%%%%%%%%%%%%%%%%%%%%%%%%%%%%%%%%%%%%%%%%%
\section{The Asymptotic Mean Kullback–Leibler Divergence (AMKLD)
  Criterion}\label{sec:amkld} 
%%%%%%%%%%%%%%%%%%%%%%%%%%%%%%%%%%%%%%%%%%%%%%%%%%%%%%%%%%%%%%%%%%%%%%%%
In this section we introduce a new measure of optimality for KDE based
on minimizing the expected Kullback–Leibler divergence between $f(\cdot)$ and
$\hf(\cdot)$. Due to its tractability, the default measure in the KDE
literature has historically been the mean integrated squared error (MISE), with its asymptotic
approximation enjoying great success in the well-known result that
the AMISE-optimal bandwidth is $O(n^{-1/5})$. Although not as
tractable, our proposed AMKLD measure is interesting due to the fact
that the EDoF developed in \eqref{eq:edof} makes a surprising appearance. 

Starting from the Kullback–Leibler (KL) divergence between $f(y)$ and
$\hf(y)$, substituting for the latter by appealing to
\eqref{eq:fhatrep} leads to the representation: 
\begin{eqnarray}
\text{KL}(f||\hf) &=& -\int_a^b f(y) \log \left(\frac{\hf(y)}{f(y)}\right) dy \nonumber \\
                     & =& -\int_a^b f(y) \log \left(1 +
                          \frac{\ba(y)}{f(y)}\right) dy -\int_a^b f(y)
                          \log\left(\sum_{k=0}^{\infty} b_k
                          Q_k(y)\right) dy, \label{eq:full-kld}
\end{eqnarray}
where $\ba(y)=f_K(y)-f(y)$ is the \emph{bias} function. Now, treating
$\ba(y)/f(y)$ as small, Taylor series approximate the logarithmic term in the first
integrand of \eqref{eq:full-kld} using $\log(1+x)\approx x-x^2/2$, to give
\begin{equation}\label{eq:amkld-bias}
  -\int_a^b f(y) \log \left(1 + \frac{\ba(y)}{f(y)}\right) dy \approx
  -\int_a^b \ba(y) dy + \frac{1}{2} \int_a^b \frac{\ba^2(y)}{f(y)} dy
  = \frac{1}{2} \int_a^b \frac{\ba^2(y)}{f(y)} dy,
\end{equation}
where the equality follows from property P1 if we can assume $f(y)$ and $f_K(y)$ share
the same support, since in this case $\int_a^b f_K(y) dy =1= \int_a^b
f(y) dy$. Since $b_0 = 1$, $Q_0(z) = 1$, and the $b_k$ are
$O(n^{-1/2})$ for $k > 0$, we can similarly expand the second integral in
\eqref{eq:full-kld} to yield:
\begin{equation}\label{eq:amkld-variance}
%  \begin{split}
  -\int_a^b f(y) \log\left(\sum_{k=0}^{\infty} b_k Q_k(y)\right) dy \approx -\sum_{k=1}^{\infty}  b_k  \int_a^b f(y) Q_k(y) dy + \frac{1}{2} \sum_{j,k=1}^{\infty} b_j b_k  \int_a^b f(y) Q_j(y) Q_k(y) dy.
%  \end{split}
\end{equation}
Now, from Section \ref{sec:oracle-edof}, note that, for $j,k>0$, we
have $\E(b_k) = 0$ and $n\Cov(b_j,b_k)=\widetilde{\Sigma}_b(j,k)$,
the $(j,k)$ elements of the covariance matrix of $\vec{b}$ (with the
first element of $\vec{b}$ excluded). With all the above assumptions,
and introducing the matrix $\mE$
whose  $(j,k)$ element is $\varepsilon_{jk} = \int_a^b \ba(y)  Q_j(y)
Q_k(y)dy$, we can express the second integral on the right-hand-side
of \eqref{eq:amkld-variance} as
$\int_a^b f(y)Q_j(y)Q_k(y)dy = \delta_{jk} - \varepsilon_{jk}$.

Therefore, to optimize
\emph{asymptotic mean KL divergence} (AMKLD), one needs to minimize
\begin{equation}\label{eq:amkld-def}
  \E\text{KL}(f ||\hf) \approx \mbox{AMKLD}(\hf) := \frac{1}{2}
  \int_a^b \frac{\ba^2(y)}{f(y)} dy + \frac{\nu}{2n} - \frac{\tr(\tS
    \tS^T\mE)}{2 n}\equiv B+V+BV,
\end{equation}
where we recall from \eqref{eq:edof} that $\nu=\tr(\tS \tS^T)$. In
this sum, the first term, $B$, represents the bias, the second term, $V$, the variance, and the
third term, $BV$, manifests what could be called \emph{bias-variance interaction}.
(In what follows, we refer to the sum of the second and the third terms
as the \emph{extended asymptotic variance}.) Simplifying expressions for the bias could now be invoked. For instance, in the case of a convolution kernel as in
\eqref{eq:convolution-kernel} with second moment $\mu_2(K)$, classical derivations lead to
$\ba(y)\approx\mu_2(K)f''(y)h^2$, with an error of $o(h^2)$ \citep{ref:Wand1995}.

%%%%%%%%%%%%%%%%%%%%%%%%%%%%%%%%%%%%%%%%%%%%%%%%%%%%%%%%%%%%%%%%%%%%%%%%
\section{Exact Calculations: the All-Gaussian Case}\label{sec-allgauss}
%%%%%%%%%%%%%%%%%%%%%%%%%%%%%%%%%%%%%%%%%%%%%%%%%%%%%%%%%%%%%%%%%%%%%%%%
Although the formula for $\nu$ in
\eqref{eq:edof} is not generally amenable to explicit calculation, we
have found one exception which sheds additional light on the meaning of EDoF. 
This is the case where the density of $X$ is $f(x)\equiv\phi(x;\sig)$, corresponding to a $N(0,\sig^2)$, and the
kernel is of the convolution type in \eqref{eq:convolution-kernel}
with $K(z)=\phi(z)$ the density of a standard normal. Since the convolution
of $\phi(x;\sig)$ with $K_h(y-x)=\phi(y-x;h)$, a Gaussian of variance $\sig^2$ with a
Gaussian of variance $h^2$, results in another Gaussian, the
density of the oracle convolution  corresponds to a normal with mean zero
and variance $\tau^2=\sigma^2 + h^2$, i.e., $f_K(y)=\phi(y;\tau)$.

Although direct calculation of $\nu$ according to
\eqref{eq:app-directOracleEdof} is more expedient here, we will proceed via the
original definition~\eqref{eq:edof} in order to reveal the structure of the kernel sensitivity matrix. For this case, the polynomials $P_k(x)$ orthonormal with weight $\phi(x;\sig)$
are
\[
  P_k(x) \equiv Hn_k(x, \sigma) = \frac{1}{2^{k/2} \sqrt{k!}}
  H_k\left(\frac{x}{\sqrt{2} \sigma}\right),
\]
where the $H_k(z)$ are the standard ``physicist's'' Hermite
polynomials orthogonal with weight function
$\exp\{-z^2\}$ \citep[\S
22.2.14]{abramowitz-stegun-1972}. The convolution of two Gaussians then
facilitates the computation of the derived OPS $\{Q_j\}$ orthonormal
with respect to the weight $\phi(y;\tau)$, so that $Q_j(y) = Hn_j(y, \tau)$.
The elements of the kernel sensitivity matrix defined by \eqref{eq:sensmat} therefore become:
\begin{equation}\label{eq:gausssens}
  s_{jk} = \frac{1}{2 \pi h \sigma} \int_{\R^2} Hn_j(y, \tau) \,Hn_k(x, \sigma) \, \exp\left\{-\frac{x^2}{2 \sigma^2}-\frac{(x - y)^2}{2 h^2}\right\} dxdy.
\end{equation}
To compute this double integral, introduce the generating function for
the standard Hermite polynomials: 
\[
  g\sub{st}(z, s) = e^{2 z s - s^2} = \sum_{n = 0}^{\infty} \frac{s^n}{n!} H_n(z),
\qquad\text{so that}\qquad
  H_k(z) = \lim_{s \rightarrow 0} \frac{\partial^k}{\partial s^k} g\sub{st}(z, s).
\]
In a similar manner, introduce the function $g(z, s, \sigma) =
\exp\{(2 z s - s^2)/(2 \sigma^2)\}$, whence
\begin{equation}\label{eq:Hermite-polys}
  Hn_k(z, \sigma) = \frac{\sigma^k}{\sqrt{k!}} \lim_{s \rightarrow 0} \frac{\partial^k}{\partial s^k} g(z, s, \sigma).
\end{equation}
The resulting integral over $\R^2$ can be computed explicitly, so that exchanging the order of limits, differentiation, and integration,
implies that \eqref{eq:gausssens} results in the simplified form:
\begin{align*}
  s_{jk} &= \delta_{j k} \frac{\sigma^k}{\sqrt{k!}} \frac{(\tau^2)^{j/2}}{\sqrt{j!}} \,k!\, (\tau^2)^{-k} = \delta_{j k} \frac{\sigma^j}{(\tau^2)^{j/2}} = \left(1 + \frac{h^2}{\sigma^2}\right)^{-j/2}\delta_{jk} .
\end{align*}
That is, the sensitivity matrix in this case is diagonal and,
therefore so is $S S^T$. According to \eqref{eq:edof}, the corresponding EDoF for this all-Gaussian
case, $\nu_G(h)$, is then simply:
\begin{equation}\label{eq:edofg}
  \nu_G(h) = \sum_{j=1}^{\infty}s_{jj}^2 = \sum_{j=1}^{\infty} \left(1 + \frac{h^2 }{\sigma^2 }\right)^{-j} = \frac{\sigma^2}{h^2}.
\end{equation}
The calculation of the EDoF of~\cite{ref:McCloud2020} is very simple in this case,
and leads to the relationship already seen in the previous section that
$\nu_3(h)=1+\nu_G(h)$. 

\begin{remark}
In our experience it is very unusual for the sensitivity matrix
$\{s_{jk}\}$ to be diagonal, as is the case in this all-Gaussian
example. Everything else we tried led to more complicated structures.
\end{remark}

Using, for example, the AMISE normal scale
plug-in rule for the bandwidth of $\hh = \left(4/3\right)^{1/5} \sigma
n^{-1/5}$, which assumes $f$ to be (correctly in this case) normal
with variance $\sig^2$, implies $\nu_G(\hh)= \left(3/4\right)^{2/5}
n^{2/5}$. This begs the tantalizing question of whether there is a
plug-in rule that minimizes the Kullback–Leibler divergence instead
  of the AMISE. Invoking the AMKLD criterion from Section
  \ref{sec:amkld}, we can in fact answer that question. The following
  theorem provides the necessary basis for that.
  
\begin{theom}\label{the:Gaussian-amkld}
Consider the convolution kernel for the all-Gaussian case where $f(x)=\phi(x;\sig)$
and $f_K(x)=\phi(x;\tau)$ discussed at the beginning of the section
with $\tau^2:=\sig^2+h^2$, and let $r=h/\sig$. Then, the three terms
in \eqref{eq:amkld-def} that define AMKLD are as follows: 
\[
B(r) = \frac{1}{2} \left( \log\left( 1 + r^2\right) - \frac{r^2}{1 +
    r^2}\right), \qquad V(r)= \frac{1}{2 nr^2},
\]
and
\[
BV(r)= \frac{1}{2 nr^2} \left(\frac{(1 + r^2)^2}{ \sqrt{(2 + r^2)^2 - 1}} - 1\right)-\frac{1}{2 n},
\]
so that the all-Gaussian AMKLD is: $\text{AMKLD}_G(r)=B(r)+V(r)+BV(r)$.
\end{theom}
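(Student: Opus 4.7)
The plan is to compute the three pieces $B(r)$, $V(r)$, $BV(r)$ in \eqref{eq:amkld-def} separately, exploiting the Gaussian-specific structure already established in the lead-up to \eqref{eq:edofg}. The variance term is immediate: since $\nu_G(h)=\sigma^2/h^2=1/r^2$ by \eqref{eq:edofg}, $V(r)=\nu_G(h)/(2n)=1/(2nr^2)$.

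For the bias term I would \emph{not} invoke the Taylor approximation \eqref{eq:amkld-bias}, because $\tfrac{1}{2}\int\ba^2(y)/f(y)\,dy$ in fact diverges as soon as $h\geq\sigma$ in the all-Gaussian case. Instead I keep the first integrand in \eqref{eq:full-kld} in its exact form $-\int f(y)\log(1+\ba(y)/f(y))\,dy=\text{KL}(f\|f_K)$, which is finite for every $r>0$. The textbook closed form for the Kullback--Leibler divergence between two zero-mean Gaussians with variances $\sigma^2$ and $\tau^2=\sigma^2(1+r^2)$ then gives $B(r)=\tfrac{1}{2}[\log(1+r^2)-r^2/(1+r^2)]$.

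The heart of the proof is the bias--variance interaction $BV(r)=-\tr(\widetilde{S}\widetilde{S}^T\mE)/(2n)$. The calculation preceding \eqref{eq:edofg} showed that $\widetilde{S}$ is \emph{diagonal} with $s_{jj}^2=T^{-j}$, where $T:=1+r^2$, so only the diagonal of $\mE$ matters and
\[
  \tr(\widetilde{S}\widetilde{S}^T\mE)\;=\;\sum_{j=1}^{\infty}T^{-j}\varepsilon_{jj},\qquad \varepsilon_{jj}\;=\;\int (f_K(y)-f(y))\,Hn_j^2(y,\tau)\,dy.
\]
Swapping sum and integral and applying Mehler's formula for $\sum_n H_n^2(u)(w/2)^n/n!$ at $w=1/T<1$ (admissible since $T>1$) and $u=y/(\sqrt{2}\tau)$ yields the closed form
\[
  \sum_{j=0}^{\infty}T^{-j}Hn_j^2(y,\tau)\;=\;\frac{T}{\sqrt{T^2-1}}\exp\!\left(\frac{y^2}{\tau^2(T+1)}\right).
\]
Since $\int(f_K-f)\,dy=0$, the $-1$ remaining after removing the $j=0$ term drops out, so the trace collapses to one Gaussian-weighted integral of the exponential.

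The remaining step is to evaluate $\int f_K(y)\,e^{y^2/[\tau^2(T+1)]}\,dy$ and $\int f(y)\,e^{y^2/[\tau^2(T+1)]}\,dy$ by completing the square in the combined exponent, using $\tau^2=\sigma^2 T$ and the factorization $T(T+1)-2=(T-1)(T+2)$. They come out to $\sqrt{(T+1)/(T-1)}$ and $\sqrt{T(T+1)/[(T-1)(T+2)]}$ respectively; substituting back while using $\sqrt{T^2-1}=\sqrt{(T-1)(T+1)}$ reduces the trace to $\tfrac{T}{T-1}[1-\sqrt{T/(T+2)}]=\tfrac{1+r^2}{r^2}[1-\sqrt{(1+r^2)/(3+r^2)}]$. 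Dividing by $-2n$ and invoking the identity $(2+r^2)^2-1=(1+r^2)(3+r^2)$ reproduces the stated $BV(r)$. The main obstacle is the Mehler summation together with verifying convergence of both Gaussian integrals (both hold for every $r>0$ precisely because $T>1$); thereafter only routine algebra remains.
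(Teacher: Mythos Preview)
Your argument is correct and, for $B(r)$ and $V(r)$, essentially identical to the paper's. For the $BV$ term, however, you take a genuinely different route. The paper works \emph{degree by degree}: it uses the Hermite generating function to obtain a closed form for each $\delta_{kk}-\varepsilon_{kk}=\int f(z)Q_k^2(z)\,dz$ as a finite binomial-type sum in $r$, then sums over $k$ by reorganizing into a double sum and recognizing geometric series (and their $2j$-th derivatives) in the inner sum. You instead swap $\sum_j$ and $\int$ at the outset and invoke Mehler's kernel to collapse $\sum_{j\ge 0}T^{-j}Hn_j^2(y,\tau)$ into a single Gaussian-type exponential, so the whole trace reduces to two elementary Gaussian integrals. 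Your route is shorter and avoids the combinatorics entirely; the paper's route, while longer, produces the per-degree quantities $\delta_{kk}-\varepsilon_{kk}$ explicitly, which could be of independent interest but are not needed for the theorem itself. Both approaches require $T=1+r^2>1$ for convergence, which you correctly note.
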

\begin{proof}
See appendix.
\end{proof}

Differentiating $\text{AMKLD}_G(r)$ and equating to zero, implies that
the optimal value of $r$, $r_\ast=h_\ast/\sig$, satisfies the equation
\begin{equation}\label{eq:optgamkl-r}
   n r_\ast^6 (3 + r_\ast^2 ) \sqrt{3 + 4 r_\ast^2  + r_\ast^4}  = 3 (1 + r_\ast^2)^3.
\end{equation}
For large $n$ and small $r$ we can approximate each side
of \eqref{eq:optgamkl-r} with their corresponding dominant terms,
leading to $3nr_\ast^6\sqrt{3} \approx 3$, from which we deduce the following new normal scale plug-in rule:
\begin{equation}\label{eq:optgamkl-h}
  h_\ast \approx \hh = (1/3)^{1/12}\sig n^{-1/6}.
\end{equation}

Thus AMKLD suggests an optimal bandwidth rule of order $n^{-1/6}$, as
opposed to the AMISE-optimal $n^{-1/5}$. For subtle reasons that we
discuss in Appendix~\ref{app:further-amkld}, this results in an
oversmoothing approach to the optimal bandwidth determination problem,
but the availability of this upper bound is useful. It
would be interesting to speculate whether this result extends in
greater generality beyond the all-Gaussian case.

%%%%%%%%%%%%%%%%%%%%%%%%%%%%%%%%%%%%%%%%%%%%%%%%%%%%%%%%%%%%%%%%%%%%%%%%
%\newpage
\section{Numerical Studies}\label{sec-numerical}
%%%%%%%%%%%%%%%%%%%%%%%%%%%%%%%%%%%%%%%%%%%%%%%%%%%%%%%%%%%%%%%%%%%%%%%%

In the subsections below, we detail some numerical studies carried out to investigate various characteristics of
the proposed EDoF; these would otherwise be difficult
(if not impossible) to unveil analytically.  In practical implementation, we have found it expedient to utilize the
KDE method based on the diffusion kernels devised by \cite{botev2010kernel}. These
generalized kernels are especially useful in compact supports due to
their automatic boundary correction properties, and they satisfy all
the properties P1 through P5 mentioned in the Introduction.
Computation
of the integral in \eqref{eq:directSampleEdof} that defines the
empirical EDoF is then facilitated  by
the fact that the supports of $f$ and $f_K$ coincide, i.e.,
$a=\ta$ and $b=\tb$. 

The choices for the oracle densities $f(x)$ in these studies
were taken to be the following.
\begin{itemize}
%\item[(i)] A standard Beta($2,5$) distribution, which is supported on $[0, 1]$.
\item[(i)] A 20\%/80\% mixture of a Gaussian with mean 0.4 and standard
deviation 0.07, and an exponential with scale parameter 0.4. The
resulting mixture was then truncated at its $0.99$ quantile, resulting
in a density supported approximately over $[0, 2]$.
\item[(ii)] A 50\%/50\% mixture of two Gaussians, both with standard
deviation $0.5$, and means $\pm 1.5$. The resulting
symmetric and bimodal density was then truncated between its $0.005$ and $0.995$ quantiles.
\end{itemize}
%These densities are displayed in Figure~\ref{fig:densities}.
Diffusion
kernels were used to produce all the KDEs in the ensuing numerical
studies whose densities are compactly supported. For the normal
density study, the Gaussian kernel was employed.

%%%%%%%%%%%%%%%%%%%%%%%%%%%%%%%%% 
\subsection{Behavior of oracle and empirical EDoF}
In this study we analyzed how the oracle and empirical EDoF measures behave as
functions of the bandwidth. From a sample of $n=2{,}000$ data points, we
considered a range of bandwidths centered around the AMISE-optimal
value, and constructed the KDE based on each bandwidth. For each KDE,
we then obtain the oracle and empirical EDoF measures defined in
\eqref{eq:app-directOracleEdof} and
\eqref{eq:directSampleEdof}. The upper panels of Figure \ref{fig:Fig1-Simulations} show, on a
logarithmic scale, how both  measures decrease
monotonically as the bandwidth increases, thus confirming the
proposed EDoF definition to be a sensible measure of model
complexity. The dotted lines display also the $\hnu_3$
of~\cite{ref:McCloud2020}, and serves as a check to dispel any
suspicions aroused by Sections~\ref{sec-histogram} and \ref{sec-allgauss} that it is always linearly related to $\hnu$.

%%%%%%%%%%%%%%%%%%%%%%%%%%%%%%%%% 
\subsection{Behavior of kernel sensitivity matrix}
We investigated the diagonal elements of the matrix $\widetilde{S}\widetilde{S}^T$, whose trace comprises the definition of
EDoF in \eqref{eq:edof}, as a function of the matrix dimension. The maximum
values for the OPS degrees were set at 50 and 150, for the
Normal-Exponential and Normal-Normal mixtures, respectively. (Since a
given OPS degree determines the matrix dimension, these maximum  degrees are
also the maximum matrix dimensions.) The maximum degrees are
chosen at the onset of numerical instabilities in the
calculation of the recurrence coefficients for the OPS
that uses the sample empirical density function as the weight.
The lower
panels of Figure \ref{fig:Fig1-Simulations} typify the results, with the
diagonals decaying rapidly to zero.

%%%%%%%%%%%%%%%%%%%%%%%%%%%%%%%%% 
\begin{figure}[htb!]
\centering
\includegraphics[width=\textwidth]{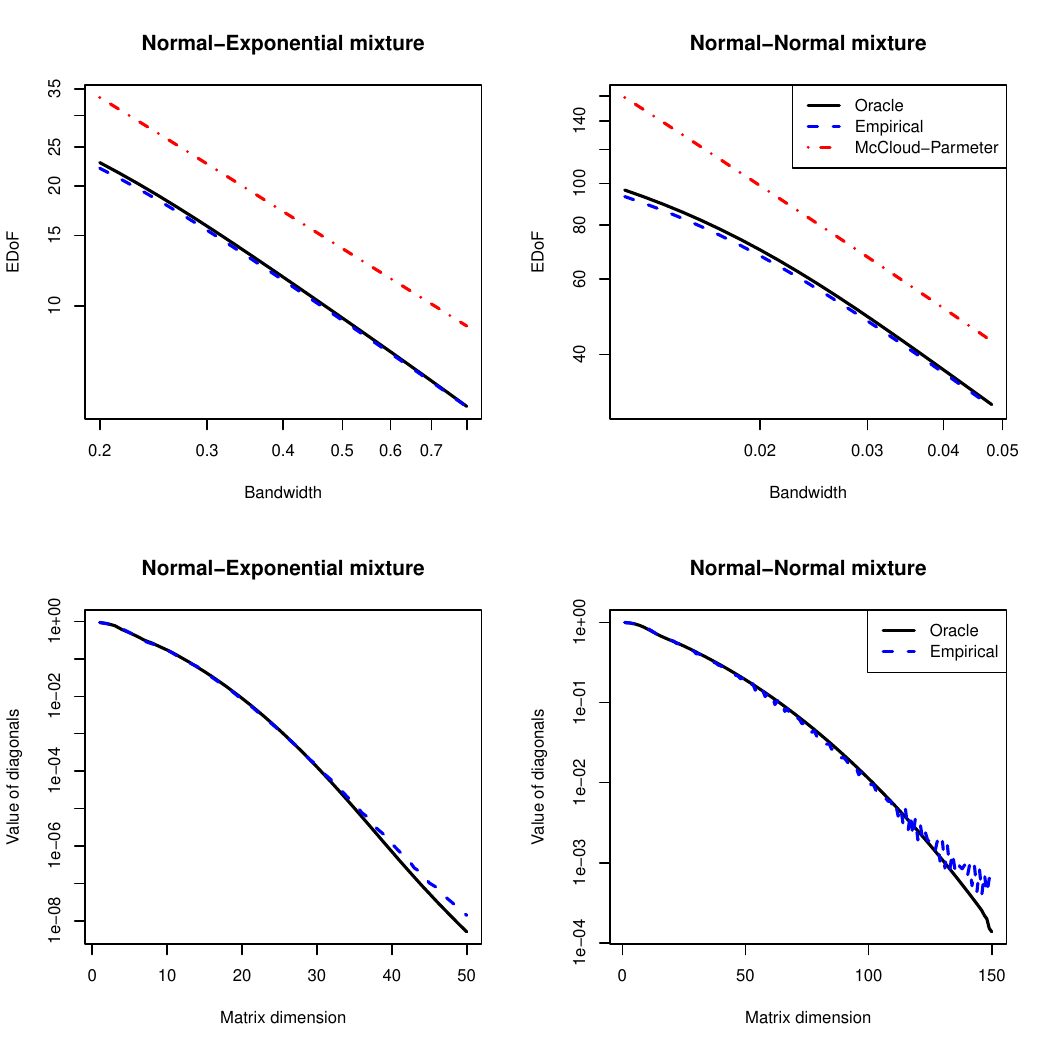}
\caption{Behavior of oracle and empirical EDoF as a function of
  bandwidth (upper panels), and of the diagonals of the 
  matrix $\widetilde{S}\widetilde{S}^T$as a function of matrix dimension (lower
  panels). The bandwidth values in the lower panels were set at $h=0.8$
and $h=0.048$, for the Normal-Exponential and Normal-Normal mixtures,
respectively. All empirical results are based on a sample size of
$n=2,000$.}
\label{fig:Fig1-Simulations}
\end{figure}
%%%%%%%%%%%%%%%%%%%%%%%%%%%%%%%%% 

%%%%%%%%%%%%%%%%%%%%%%%%%%%%%%%%% 
\subsection{Bias and variance of EDoF}
Here we generated 1,000 replicates from the distributions over a
representative set of sample sizes ranging from $n=200$ to $n=10{,}000$. For
a given sample size, we then obtain the (empirical) bias and the variance of the
empirical EDoF, $\hnu$. The oracle values of $\nu$ are computed with
the AMISE-optimal value for the bandwidth (based on the oracle density
itself). The results, presented in the top panels of Figure \ref{fig:Fig2-Simulations}, confirm the expected result that both bias and variance gradually
decrease as sample size increases, with larger magnitudes in the
Normal-Exponential case. Interestingly, the bias is negative in both
cases. Based on
a sample of size $n$, it is not possible to produce an OPS with degree larger than $n/2$ because
more than $n$ recurrence coefficients would be needed. 
Thus  the EDoFs associated
with high order polynomials do not exist, and perhaps
this is what is causing the negative bias.

%%%%%%%%%%%%%%%%%%%%%%%%%%%%%%%%% 
\begin{figure}[htb!]
\centering
\includegraphics[width=\textwidth]{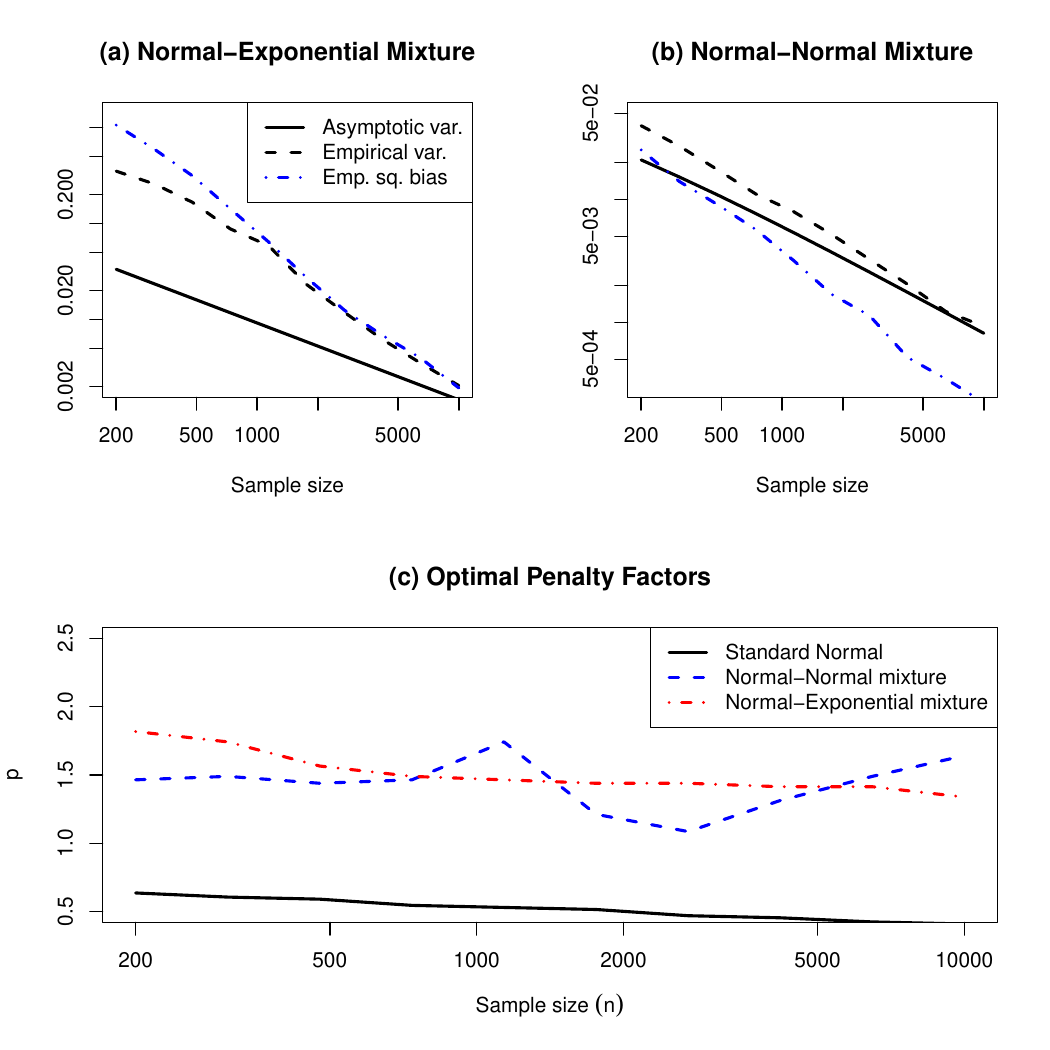}
\caption{Upper panels: squared bias and variance of the empirical EDoF as a function of sample size (logarithmic scale). Lower panel: optimal AIC penalty factor for bandwidth selection as a
  function of sample size (logarithmic scale).}
\label{fig:Fig2-Simulations}
\end{figure}
%%%%%%%%%%%%%%%%%%%%%%%%%%%%%%%%% 

%%%%%%%%%%%%%%%%%%%%%%%%%%%%%%%%% 
\subsection{Optimal penalty factors for information criterion-based bandwidth selection}
Explicit calculations in Section~\ref{sec-allgauss}
for the all-Gaussian case led to a new plug-in rule based on an
asymptotic approximation, AMKLD. However,
these calculations, already quite difficult there, are infeasible in
greater generality, and thus one  would hope that minimization of an
information criterion-based quantity such as AIC would provide an
adequate substitute.
Therefore, we now focus  on the development of an
optimized AIC-like criterion for bandwidth selection. That
is, we seek to
investigate the penalty factors $p$ that minimize
\begin{equation}\label{eq:aic}
AIC(h) = p \nu(h) - \ell(h), \qquad \ell(h) = \sum_{i=1}^n\log \hf(x_i|h).
\end{equation}
Since the EDoF
term $\nu(h)$ is an integral part of AIC, that derivation had to come
first, and hence the title of the paper. (Note that $p=1$ in the classical applications of AIC.) 

Through numerical studies, we obtained for each of three distributions (the two
mixtures as well as the standard normal), and for a range of sample
sizes $n$, the bandwidth value that minimizes the KL distance between
the KDE and the true density. Motivated by AMISE optimality
considerations, we then fit the bandwidth as a
polynomial function of $x=n^{-1/5}$. The fitting is via weighted
least-squares with weights inversely proportional to the bandwidth uncertainty (itself assumed to be proportional to the bandwidth). We thus have a function which, for a given
distribution and sample size, returns the KL-optimal value for the KDE
bandwidth $h_\ast$. For example, for the $\mN(0,1)$, we obtained the
cubic $h_\ast=1.2463x-0.5902x^2+0.7349x^3$.
%\begin{verbatim}
%h* = 1.2463163038821108*x-0.59024706537934435*x^2+0.73486923226619183*x^3
%\end{verbatim}

We now differentiate $AIC(h)$ with respect to $h$, evaluate it at 
the KL-optimal bandwidth, and aim to find the penalty factor which
minimizes it. That is, we solved for $p$ in the equation:
\[
  \left.\frac{\partial}{\partial h}AIC(h) \right|_{h = h_\ast} = 0.
\]
Derivatives of $\ell(h)$  and $\nu(h)$ were obtained
numerically. (The Gaussian case is facilitated by the fact that from
\eqref{eq:edofg} one has that $\partial \nu(h)/\partial h=-2\sig^2/h^3$.)
The results, displayed on the bottom panel of Figure \ref{fig:Fig2-Simulations}, do not deviate
far from the value of $p=1$ in classical applications, and suggest little
dependence of the optimal factor on sample size.

%%%%%%%%%%%%%%%%%%%%%%%%%%%%%%%%%%%%%%%%%%%%%%%%%%%%%%%%%%%%%%%%%%%%%%%%
\section{A Real Example: the Old Faithful Data}\label{sec-realdata}
%%%%%%%%%%%%%%%%%%%%%%%%%%%%%%%%%%%%%%%%%%%%%%%%%%%%%%%%%%%%%%%%%%%%%%%%
To illustrate some real EDoF calculations,  we consider the waiting
time in minutes between successive eruptions of the Old Faithful geyser in
Yellowstone National Park. These data were popularized in the KDE
literature by \citet{azzalini1990geyser} to exemplify a
distinctly bimodal distribution, with peaks at approximately 55 and 80
minutes. There are many similar versions in usage; the one we selected 
consists of $n=272$ observations available as data set \texttt{faithful} in R.

The top left panel of Figure~\ref{fig:Fig3-OldFaithful} displays the data as a rug plot, as well
as three KDE's, obtained using different bandwidth selection rules
(all with the default Gaussian kernel). The values of these and a few other bandwidths
can be seen in Table~\ref{tab:OldFaithful-bandwidths}. The regularized
likelihood cross-validation (CV) method maximizes the product of the leave-one-out
density estimates, and is described in \cite[\S
4.2]{amali2017lorpe} ($\alpha$ is a regularization parameter
that circumvents the problematic situation of zero density
estimates). The AMKLD rule of thumb uses \eqref{eq:optgamkl-h}, with
the usual empirical estimate of the standard deviation.
%%%%%%%%%%%%%%%%%%%%%%%%%%%%%%%%% 
\begin{figure}[htb!]
\centering
\includegraphics[width=\textwidth]{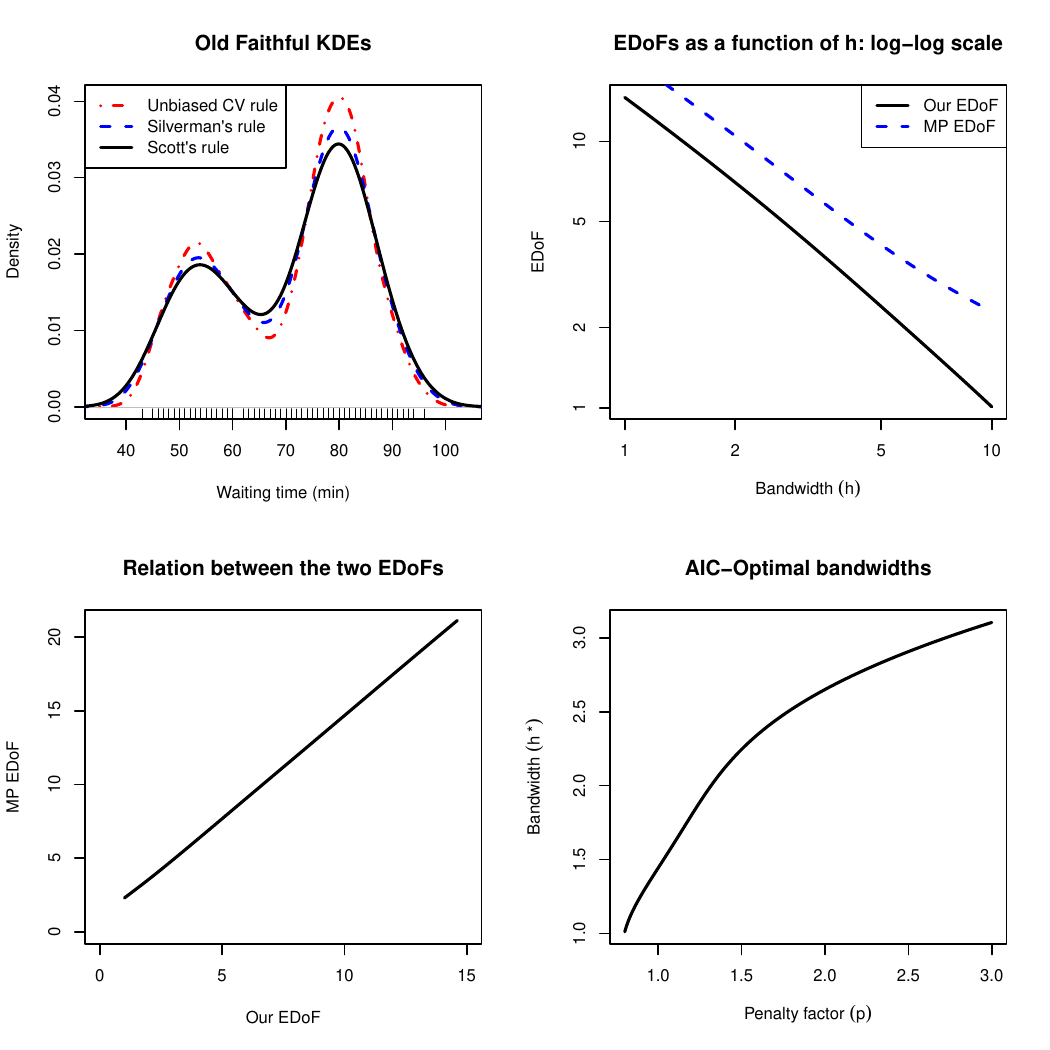}
\caption{Old Faithful waiting times. Panel 1: KDEs produced by
  different bandwidth selection rules. Panel 2: our $\hnu$ (solid line) and McCloud \&
  Parmeter's $\hnu_3$ (dashed line), as a function of
  bandwidth. Panel 3: $\hnu$ vs.~$\hnu_3$ (parametrized by
  $h$). Panel 4: AIC-optimal bandwidth as a function of the AIC penalty factor.}
\label{fig:Fig3-OldFaithful}
\end{figure}
%%%%%%%%%%%%%%%%%%%%%%%%%%%%%%%%% 

%%%%%%%%%%%%%%%%%%%%%%%%%%%%%%%%%%%%% 
\begin{table}[htb!]
%\scriptsize
\centering
\caption{Bandwidths for the Old Faithful waiting times produced by
  different methods (Gaussian kernel). The middle column lists the
  \texttt{bw} argument name in the R function \texttt{density}
  that implements the rule given in the first column.} 
\label{tab:OldFaithful-bandwidths}
\centering % centering table
\begin{tabular}{lcccc} % creating 10 columns
\toprule
\textbf{Method} & \textbf{R Name} & \textbf{Bandwidth} & $\hnu$ &
                                                                    $\hnu_3$ \\	
\midrule
Regularized likelihood CV ($\alpha=0.5$) & ---  & 2.2550  & 6.1 & 9.3 \\
Biased CV        & \texttt{bcv}  & 2.5976 & 5.2 & 8.0 \\
Unbiased CV      & \texttt{ucv}  & 2.6582 & 5.1 & 7.8  \\
Silverman's rule & \texttt{nrd0} & 3.9876 & 3.2 & 5.1  \\
Scott's rule     & \texttt{nrd}  & 4.6965 & 2.6 & 4.3  \\
AMKLD ($\hat{\sig}=13.59$) & ---  & 4.8737  & 2.5 & 4.2  \\ 
\bottomrule
\end{tabular}
\end{table}
%%%%%%%%%%%%%%%%%%%%%%%%%%%%%%%%%%%%% 

In order to illustrate the values of the empirical EDoF's for a variety of
bandwidths, we display in the top right panel of Figure~\ref{fig:Fig3-OldFaithful} the two
essentially distinct EDoF's: our $\hnu$ and  McCloud \&  Parmeter's
$\hnu_3$. Over the range of suggested bandwidths  in
Table~\ref{tab:OldFaithful-bandwidths},  $\hnu$ varies from approximately 3 to
6, with $\hnu_3$ values between 2 to 3 units larger. The almost linear
relation on the bottom left panel is perhaps unsurprising, given
that the oracle calculations in Section \ref{sec-allgauss} suggest
such a relationship for Gaussian densities and kernels (the data
look Gaussian, albeit a mixture).

The bottom right panel of Figure~\ref{fig:Fig3-OldFaithful} plots the
AIC-optimal bandwidth as a function of the AIC penalty factor. More
specifically, taking $\ell(h)$ to be the empirical log-likelihood as defined
in~\eqref{eq:aic} and $\hnu(h)$ our empirical EDoF, let 
\begin{equation}\label{eq:emp-aic}
AIC(h) = p \hnu(h) - \ell(h).
\end{equation}
For each fixed $p$, we then minimize $AIC(h)$
(with quadratic interpolation on a grid of $h$ values near
the minimum), in order to identify the AIC-optimal bandwidth,
$h_\ast(p)$. The panel displays the pairs $(p,h_\ast(p))$. It is
interesting to note that for $p\approx 1.5$, as for the mixtures in
the bottom panel of Figure \ref{fig:Fig2-Simulations}, the AIC-optimal bandwidth is
about 2.25, which is similar to the bandwidth obtained from
regularized likelihood CV.

%%%%%%%%%%%%%%%%%%%%%%%%%%%%%%%%%%%%%%%%%%%%%%%%%%%%%%%%%%%%%%%%%%%%%%%%
%\newpage
\section{Discussion}
%%%%%%%%%%%%%%%%%%%%%%%%%%%%%%%%%%%%%%%%%%%%%%%%%%%%%%%%%%%%%%%%%%%%%%%%

We have proposed an oracle definition of EDoF in KDE by
relating the coefficients in OPS expansions of the
empirical density, $\rho(x)$, to the actual KDE, $\hf(x)$, via a
matrix. By viewing the OPS for $\rho(x)$ as collection of uncorrelated
``noise sources'', this \emph{sensitivity} matrix dictates how these noise sources are
suppressed by smoothing, whence taking the trace of a positive semi-definite
and normalized version leads to a sensible definition of EDoF,
as commonly used in the literature on linear smoothing. A proof of
concept is established by demonstrating that, when a histogramming
kernel is employed (so that KDE yields a histogram), EDoF outputs
the correct value: the number of bins minus one. An empirical version
of EDoF follows by the plug-in principle, whence
asymptotic results are then straightforwardly established. 

Some interesting results are obtained when connecting our EDoF to the
only other versions appearing in the literature: two from
\cite{ref:Loader1999} and one from \cite{ref:McCloud2020}. First, we
point out that these three competing versions are empirical, with no
known oracle counterparts. We showed that the influence function based
version of \cite{ref:Loader1999} is approximately the same as that
proposed by \cite{ref:McCloud2020}. This is perhaps not surprising,
since the idea behind both is to formulate the problem in a
regression context, whence the appropriate value is obtained by taking
the trace of the smoothing matrix. Since the EDoF is in the form of
an empirical average, an oracle version is immediate. More surprising
is the fact that the variance function based
version of \cite{ref:Loader1999} is algebraically similar to our
EDoF. %(we are not however able to propose an oracle counterpart).

Outside of the simple histogram, we are able to provide a closed-form
expression for our oracle
EDoF only in the instance when both density and kernel are 
Gaussian. In this case it is
simply the ratio of the variance of the density to the square of the
bandwidth, which then decays sensibly to zero as the latter
grows. An interesting related question was whether minimization of KL divergence would be
feasible as an alternative to more traditional MISE-based bandwidth
selection rules, and we showed that this is indeed the
case with our AMKLD expression which holds for a general density and
kernel. Although this is not sufficiently tractable to be
analytically optimized to yield a plug-in rule, it is in the special
case  when both density and kernel are 
Gaussian, and this yielded a new normal scale rule in which the
optimal bandwidth is $O(n^{-1/6})$. This rule suggests larger 
values than the AMISE optimal $O(n^{-1/5})$ bandwidth, providing a useful upper bound. 

The fact that the EDoF may be a plausible measure of model complexity
leads to the possibility of a new tuning parameter selection rule based
on an information criterion such as AIC, and would therefore  offer a
competing approach to the work of \cite{ref:Loader1999} in
the context of local likelihood for bandwidth selection. (Note however that our development is more
general as it is not restricted to convolution kernels.)  Some numerical work in
inverting this model-selection problem
suggests that the optimal penalty factor (the multiplier of EDoF in
the AIC formulation) is fairly insensitive to the sample
size, but somewhat sensitive to the
nature of the underlying density. Noting that our EDoF captures only
the variance portion of the log-likelihood  (due to its similarity with $\nu_2$), perhaps an
additional piece accounting for bias would improve its performance in
this paradigm.

%%%%%%%%%%%%%%%%%%%%%%%%%%%%%%%%%%
%Acknowledgment Statements and 
\section*{Declarations}
This work was partially supported by US Department of Energy grant
DE-SC0015592. The authors have no competing interests to declare that are relevant to the content of this article.
%%%%%%%%%%%%%%%%%%%%%%%%%%%%%%%%%%

%%%%%%%%%%%%%%%%%%%%%%%%%%%%%%%%%%
%\section*{Disclosure statement}
%The authors report there are no competing interests to declare.
%%%%%%%%%%%%%%%%%%%%%%%%%%%%%%%%%%

%%%%%%%%%%%%%%%%%%%%%%%%%%%%%%%%%%%%%%%%%%%%%%%%%%%%%%%%%%%%%%%%%%%%% 
%%% References
%%%%%%%%%%%%%%%%%%%%%%%%%%%%%%%%%%%%%%%%%%%%%%%%%%%%%%%%%%%%%%%%%%%%% 
%\newpage % needs natbib
\bibliographystyle{apalike}
\bibliography{edof.bib}

%%%%%%%%%%%%%%%%%%%%%%%%%%%%%%%%%%%%%%%%%%%%%%%%%%%%%%%%%%%%%%%%%%%%%%%%
%\newpage
\appendix
\section{Appendix}%\label{sec-realdata}
%%%%%%%%%%%%%%%%%%%%%%%%%%%%%%%%%%%%%%%%%%%%%%%%%%%%%%%%%%%%%%%%%%%%%%%%

\subsection{Proof of Lemma \ref{the:General-Parseval-Identity}}
Since $\{P_k(x)Q_k(y)\}$ forms a complete system, we can appeal to the approximation:
$g(x,y) \approx  \widehat{g}_N(x,y) = \sum_{j,k=0}^Ns_{jk}
P_j(x)Q_k(y)$, with the property that
\[
\lim_{N\rightarrow\infty}\int_{\mY}\int_{\mX} \left[\widehat{g}_N(x,y)-g(x,y)\right]^2dF_P(x)
  dF_Q(y) = 0.
\]
A proof of this in the univariate  case  can be found in, for example,
\cite[Ch.~10]{severini2005elements}. The bivariate (and indeed
multivariate) case follows by a similar argument.  The result hinges crucially on
the completeness of the OPS and the so-called Stone-Weierstrass
Theorem. This theorem is now known to hold in general when applied to any continuous real-valued function
supported on a compact Hausdorff space \citep[Ch.~13]{lax2002functional}, which
includes any closed subset of (finite dimensional) Euclidean space.  
With this setup in place, derivation of the
generalized Parseval Identity is straightforward, by noting that both 
\[
  \int_{\mY} \int_{\mX} \widehat{g}^2(x,y)dF_P(x) dF_Q(y) = \sum_{j,k=0}^{\infty}s_{jk}^2 = \int_{\mY} \int_{\mX} g(x,y)\widehat{g}(x,y)dF_P(x) dF_Q(y),
\]
yield the same result as \eqref{eq:app-General-Parseval-Identity}.

\subsection{Proof of Theorem \ref{the:edof-nu}}
%\begin{proof}
Noting the equivalence of the four versions of $s_{jk}$, make the substitutions $w = F(x)$ and $z = F_K(y)$ in \eqref{eq:salllegendre}, whence
$f(x)dx=dw$ and $f_K(y)dy=dz$, resulting in Legendre
OPS's on $[0,1]$, so that the sensitivity matrix elements become:
\begin{equation}\label{eq:app-bivarCoeffs}
  s_{jk} = \int_0^1 \int_0^1 \frac{K(x(w), y(z))}{f_K(y(z))} L_j(z) L_k(w) dw dz.
\end{equation}
Now apply Lemma~\ref{the:General-Parseval-Identity} with $g(w,
z)=K(x(w),y(z))/f_K(y(z))$ to obtain
\[
\sum_{j, k = 0}^{\infty} s_{jk}^2 = \int_0^1 \int_0^1 g^2(w, z)dw dz,
\quad\Longrightarrow\quad 
  \nu = \sum_{j, k = 1}^{\infty} s_{jk}^2 = \int_0^1 \int_0^1 g^2(w, z) dw dz - 1.
\]
Switching back to the original $x$ and $y$  variables yields the
result in \eqref{eq:app-directOracleEdof}.
%\end{proof}

\subsection{Proof of Theorem \ref{the:clt-effdf}}
%\begin{proof}
For notational simplicity, we suppress the dependence on $h$. We start
with the IF for $r(y)=\mu_2(y)/\mu_1(y)$, call it $L_4(x;y)$, which is obtained by
applying the chain rule:
\[
  L_4(x;y) = \frac{\partial r(y)}{\partial
    \mu_1(y)}L_1(x;y)+\frac{\partial r(y)}{\partial \mu_2(y)}L_2(x;y) =
  \frac{K^2_h(x,y)}{\mu_1(y)}-\frac{\mu_2(y)}{\mu_1(y)^2}K_h(x,y).
\]
The ultimate goal is the IF for $t_4(F)=\int r(y)dy$. Since the
Gateaux derivative in \eqref{eq:influence-function} can be
interchanged with the integral, we obtain:
 $L_4(x) =  \int_{\ta}^{\tb}L_4(x;y)dy$. 
The Nonparametric Delta Method then gives the stated asymptotic distribution, where:
\[
\omega^2_h = \E_F L_4(X)^2 =   \int_{\ta}^{\tb}\int_{\ta}^{\tb}\E_F
\left\{L_4(X;w)L_4(X;z)\right\}dwdz.
\]
(Note: empirical IFs are obtained by replacing $F\mapsto \hf$,
whence $L(\cdot)\mapsto \ell(\cdot)$.)
%\end{proof}

%%%%%%%%%%%%%%%%%%%%%%%%%%%%%%%%%%%%%%%%%%%%%%%%%%%%%%%%%%%%%%%%%%%%%%%%%%%%%%
\subsection{Proof of Theorem \ref{the:Gaussian-amkld}}
%%%%%%%%%%%%%%%%%%%%%%%%%%%%%%%%%%%%%%%%%%%%%%%%%%%%%%%%%%%%%%%%%%%%%%%%%%%%%%
From \eqref{eq:edofg} we obtain immediately the expression for $V(r)$. The bias term can actually be calculated exactly:
\begin{equation}\label{eq:exactbias}
B(r) = -\int_{-\infty}^{\infty} \phi(z, \sigma) \log \left(\frac{\phi(z,\tau)}{\phi(z, \sigma)}\right) dz  = \frac{1}{2} \left(\log\left( 1 + r^2\right) - \frac{r^2}{1 +
    r^2} \right).
\end{equation}
Now, the \emph{extended asymptotic variance}, the sum of
the 2nd and 3rd terms in $\text{AMKLD}_G(r)$, can be expressed as:
\[
  V(r)+BV(r)=\frac{1}{2n}\tr\left(\tS\tS^T(I-\mE)\right)=\frac{1}{2 n}
  \sum_{k=1}^{\infty}\left( 1 + r^2\right)^{-k}(\delta_{kk} - \varepsilon_{kk}).
\]
These terms therefore depend on the difference of matrix elements:
\begin{equation}
  \begin{split}
    \delta_{jk} - \varepsilon_{jk} & = \int_{-\infty}^{\infty} \phi(z, \sigma) Hn_j(z, \tau) Hn_k(z, \tau) dz \\
                                 & = \frac{\tau^{j+k}}{\sqrt{j!k!}} \lim \limits_{\substack{
      s \to 0\\
      t \to 0}} \frac{\partial^j}{\partial s^j} \frac{\partial^k}{\partial t^k} \int_{-\infty}^{\infty} \phi(z, \sigma) g(z, s, \tau) g(z, t, \tau) dz \\
    & = \frac{\tau^{j+k}}{\sqrt{j!k!}} \lim \limits_{\substack{
      s \to 0\\
      t \to 0}} \frac{\partial^j}{\partial s^j} \frac{\partial^k}{\partial t^k} \exp \left\{\frac{2 s t \sigma^2 - h^2 (s^2 + t^2)}{2\tau^4}\right\},
  \end{split}
\end{equation}
so that setting $j=k$,
\[
  \delta_{kk} - \varepsilon_{kk} = \frac{\tau^{2k}}{k!} \lim \limits_{\substack{
      s \to 0\\
      t \to 0}} \frac{\partial^k}{\partial s^k} \frac{\partial^k}{\partial t^k} \exp \left\{\frac{2 s t \sigma^2 - h^2 (s^2 + t^2)}{2\tau^4}\right\}.
\]
Taylor expanding the exponential, letting $C_k^j = k!/(j! (k - j)!)$
denote the binomial coefficients, and noting that only the terms
proportional to $s^k t^k$ contribute, we obtain
\begin{equation}
  \begin{split}
  \delta_{kk} - \varepsilon_{kk} & = \frac{1}{(k!)^2 2^k\tau^{2k}} \lim \limits_{\substack{
      s \to 0\\ t \to 0}} \frac{\partial^k}{\partial s^k} \frac{\partial^k}{\partial t^k} \left(2 s t \sigma^2 - h^2 (s^2 + t^2)\right)^k \\
  & =  \frac{1}{(k!)^2 2^k\tau^{2k}} \lim \limits_{\substack{
      s \to 0\\
      t \to 0}} \frac{\partial^k}{\partial s^k} \frac{\partial^k}{\partial t^k} \sum_{j=0}^k C_k^j (2 s t \sigma^2)^{k - j} (-1)^j h^{2 j} (s^2 + t^2)^j \\
  & = \frac{1}{(k!)^2 2^k\tau^{2k}} \lim \limits_{\substack{
      s \to 0\\
      t \to 0}} \frac{\partial^k}{\partial s^k} \frac{\partial^k}{\partial t^k} \sum_{\substack{j = 0\\ j \mbox{\scriptsize \,even}}}^k C_k^j C_j^{j/2} (2 \sigma^2)^{k - j} (-1)^j h^{2 j} s^k t^k \\
%  & = \left(1 + \left(\frac{h}{\sigma}\right)^2 \right)^{-k} \sum_{\substack{j = 0\\ j \mbox{\scriptsize \,is even}}}^k \frac{k!}{2^j (k - j)! ((j/2)!)^2} \left(\frac{h}{\sigma}\right)^{2 j} \\
  & = \frac{k!}{(1 + r^2)^{k}} \,\sum_{j=0}^{\floor*{k/2}} \frac{r^{4j}}{2^{2j} (k - 2j)! (j!)^2},
  \end{split}
\end{equation}
which leads to the expression:
\[
  V(r)+BV(r) = \frac{1}{2 n} \sum_{k=1}^{\infty}\frac{k!}{(1 +
    r^2)^{2k}} \,\sum_{j=0}^{\floor*{k/2}} \frac{r^{4j}}{2^{2j} (k -
    2j)! (j!)^2} := \frac{1}{2 n}\sum_{j=0}^{\infty}T_j(r).
\]
Progress can now be made by examining each $T_j(r)$ term, which
involves summing over $k$. The $j=0$ term yields
 a straightforward geometric series, $T_0(r)=[r^2(2 + r^2)]^{-1}$, and the
 remaining terms reduce to a geometric series upon differentiation an
 appropriate number of times. It can therefore  be shown that for $j>0$:
\begin{equation*}
 T_j(r)= \frac{r^{4 j}}{2^{2 j} (j!)^2}\; \lim_{x \rightarrow 1}
 \frac{\partial^{2 j}}{\partial x^{2 j}}\left\{ \frac{x^{2 j} \left(1 + r^2
     \right)^{-4 j}}{1 - x \left(1 + r^2 \right)^{-2}}\right\}
 = \frac{(1 + r^2)^2}{r^2 (2 + r^2)^{2 j + 1}}\,a_j ,\qquad a_j=\frac{(2 j)!}{2^{2 j} (j!)^2},
\end{equation*}
whence we obtain:
\[
  V(r)+BV(r) = \frac{1}{2 n}\left( \frac{(1 + r^2)^2}{r^2 (2 + r^2)} \sum_{j=0}^{\infty} \frac{a_j}{(2 + r^2)^{2 j}} - 1\right)
  =  \frac{1}{2 n} \left(\frac{(1 + r^2)^2}{r^2 \sqrt{(2 + r^2)^2 - 1}} - 1\right).
\]

%%%%%%%%%%%%%%%%%%%%%%%%%%%%%%%%%%%%%%%%%%%%%%%%%%%%%%%%%%%%%%%%%%%%%%%%%%%%%%
\subsection{Further Considerations for Gaussian AMKLD}\label{app:further-amkld}
%%%%%%%%%%%%%%%%%%%%%%%%%%%%%%%%%%%%%%%%%%%%%%%%%%%%%%%%%%%%%%%%%%%%%%%%%%%%%%
It should be noted that in practice the AMKLD normal scale plug-in
rule arrived at by solving \eqref{eq:optgamkl-r}  will, in the vast majority of cases, result in a bandwidth
that leads to oversmoothing. The reason for this can be understood
as follows. The expression for $\Cov(c_k, c_j)$ given in
\eqref{eq:cj} hinges on the orthogonality property of Hermite polynomials:
\begin{equation} \label{eq:hermiteortho}
   \int_{-\infty}^{\infty} Hn_k(x, \sigma) Hn_j(x, \sigma) \phi(x, \sigma) dx = \delta_{kj}.
\end{equation}
For finite samples, $\phi(x, \sigma)$ is replaced by the empirical
density, and the integral  becomes
\begin{equation} \label{eq:sampleortho}
\Theta_{jk} = \frac{1}{n} \sum_{i=1}^n Hn_j(x_i, \sigma) Hn_k(x_i, \sigma).
\end{equation}
While $\E\Theta_{jk}=\delta_{jk}$ as dictated by \eqref{eq:hermiteortho},
for large polynomial degrees and for practical sample sizes, the
 expectation of $\Theta_{jk}$ will be a poor approximation to its
typical values. Consider, for example, the case $j=50=k$. The
integrand of \eqref{eq:hermiteortho} is shown in the top left panel of
Figure \eqref{fig:Fig4-SamplingDistsSk} (for $\sigma = 1$). This integrand is substantial approximately over the interval $\pm
2\sigma\sqrt{j + 1/2}$, and receives its largest contributions from
the regions near the interval boundaries around $\pm 14 \sigma$. Since
the probability of
getting standard normal random variates as large as $\pm 14 \sigma$ for realistic sample
sizes is extremely small, $\Theta_{jj}$ for $j = 50$ will tend to be
substantially smaller than its expected value of $1$.

%%%%%%%%%%%%%%%%%%%%%%%%%%%%%%%%% 
\begin{figure}[htb!]
\centering
\includegraphics[width=\textwidth]{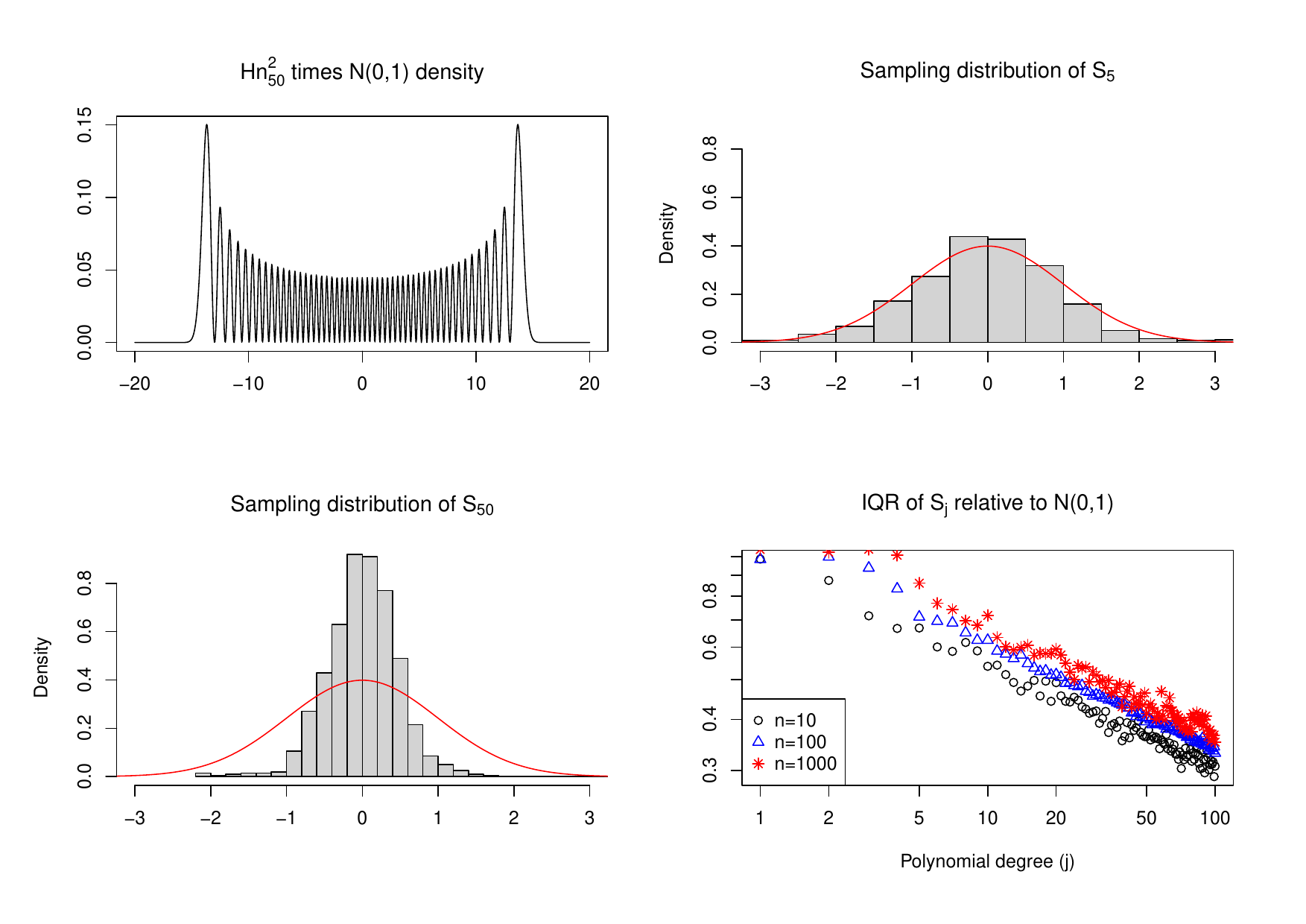}
\caption{Illustration of issues arising in the case of
  Gaussian AMKLD for finite sample sizes. Top left panel: the function in the
  integrand of \eqref{eq:hermiteortho}. Top right and bottom left panels: histograms of sampling
  distributions of $S_{5}$ and $S_{50}$, respectively, along with the
  standard normal density curve. Bottom right panel: interquartile ranges of
$S_j$ relative to the standard normal, as a function of $j$ (both axes
on log scale), for sample sizes of $n=\{10,100,1000\}$.}
\label{fig:Fig4-SamplingDistsSk}
\end{figure}
%%%%%%%%%%%%%%%%%%%%%%%%%%%%%%%%% 

Consider now the optimal bandwidth rule \eqref{eq:optgamkl-h}. This
rule results in oversmoothing because the expected extended
asymptotic variance term, $V(r)+BV(r)$, derived
in the transition from \eqref{eq:amkld-variance} to \eqref{eq:amkld-def},
overestimates the contribution of variance
into KL divergence  for ``typical'' realistic samples. This
leads to a wider bandwidth that, in the vast
majority of cases, will penalize the $V(r)+BV(r)$ term more 
than what is necessary. For distributions supported on a finite
interval and bounded away from 0, \eqref{eq:amkld-def} is expected to work
well. However,  similar problems are likely to surface for
distributions with longer tails than that of the Gaussian
(for example, the effect is very clear for the exponential and its associated
OPS comprised of Laguerre polynomials). AMKLD for
such cases will be dominated by fluctuations in the tails.

Another way to view this effect is to consider the distributions of
coefficients $c_j$.  While it is true that for $j\geq 1$ we have
$\V(c_j)=1/n$, for large degrees $j$ these distributions become highly leptokurtic, with most of the variance contributed by the long
tails. This effect can be seen in the top right and bottom left panels of Figure
\eqref{fig:Fig4-SamplingDistsSk} which show the sampling distributions
of $S_{j}:=c_j\sqrt{n}$ for $j = 5$ and $j=50$, respectively (both
with $n=1000$). (Note that with this scaling $S_{j}$ has exactly zero mean and
unit variance.) The dramatic increase in kurtosis in going from the
lower to the higher degree is evident. The bottom right panel of Figure
\eqref{fig:Fig4-SamplingDistsSk} illustrates another way to visualize
the effect by plotting $j$ vs.~the interquartile ranges of
$S_j$, $\text{IQR}_j$, 
relative to that for a standard normal (the appropriate limiting distribution
as $n\rightarrow\infty$). The panel therefore displays, on
logarithmic scales,  the pairs of
points $(j,w_j)$, where, letting $\Phi(\cdot)$ denote the cdf of a standard normal,
\[
w_j = \frac{\text{IQR}_j}{\Phi^{-1}(0.75)-\Phi^{-1}(0.25)}.
\]
There are three sets of points on the plot, corresponding to different sample sizes. 
The decline of $w_j$ values with increasing $j$ is another
manifestation that an increasing portion of the mass is going into the tails.
We see that for larger sample sizes the
divergence of $w_j$ away from its nominal asymptotic value of $1$ occurs at
correspondingly larger degrees $j$, hinting at a complex interplay in
the dual asymptotic regimes of $j$ and $n$. The plot suggests the
approximate relationship $w_j \propto j^{-1/4}$ for large $j$.
%Interestingly this doesn't happen for other ploy systems!

%%%%%%%%%%%%%%%%%%%%%%%%%%%%%%%%%%%%%%%%%%%%%%%%%%%%%%%%%%%%%%%%%%%%%
\end{document}